\documentclass[lettersize,journal]{IEEEtran}
\usepackage{amsmath,amsfonts, amssymb,bbm,amsthm,mathtools,nccmath}

\DeclareMathOperator*{\argmin}{\arg\!\min}
\usepackage{algorithmic}
\usepackage{algorithm}
\usepackage{array}
\usepackage[caption=false,font=normalsize,labelfont=sf,textfont=sf]{subfig}
\usepackage{textcomp}
\usepackage{stfloats}
\usepackage{url}
\usepackage{verbatim}
\usepackage{graphicx}
\usepackage{enumitem}
\usepackage{cite}
\usepackage[normalem]{ulem}
\usepackage{cancel}
\usepackage{xcolor}
\usepackage{hyperref}

\newtheorem{theorem}{Theorem}\newtheorem{lemma}{Lemma}\newtheorem{remark}{Remark}\newtheorem{corollary}{Corollary}
\newtheorem{proposition}{Proposition}\newtheorem{definition}{Definition}\newtheorem{assumption}{Assumption}

\begin{document}

\title{Age of Incorrect Information With Hybrid ARQ Under a Resource Constraint for $N$-ary Symmetric Markov Sources}

\author{Konstantinos Bountrogiannis,~Anthony Ephremides,~Panagiotis Tsakalides,~and~George Tzagkarakis% <-this % stops a space
\IEEEcompsocitemizethanks{\IEEEcompsocthanksitem Konstantinos Bountrogiannis and Panagiotis Tsakalides are with the Department of Computer Science, University of Crete, Heraklion 700~13, Greece, and with the Institute of Computer Science, Foundation for Research and Technology -- Hellas, Heraklion 700~13, Greece\protect\\
E-mail: kbountrogiannis@csd.uoc.gr;  tsakalid@csd.uoc.gr
\IEEEcompsocthanksitem Anthony Ephremides is with the Electrical and Computer Engineering Department, University of Maryland, College Park, MD 20742 USA\protect\\
E-mail: etony@umd.edu
\IEEEcompsocthanksitem George Tzagkarakis is with the Institute of Computer Science, Foundation for Research and Technology -- Hellas, Heraklion 700~13, Greece\protect\\
E-mail: gtzag@ics.forth.gr}% <-this % stops a space
\thanks{This work was supported in part by the European Commission under Grant 101094354 (ARGOS - Conceptual Design Study project) and in part by the European Commission under the framework of the National Recovery and Resilience Plan Greece 2.0 -- NextGenerationEU, Grant TAEDR-0536642 (Smart Cities project).}}

% The paper headers
%\markboth{IEEE/ACM Transactions on Networking}%
%{Bountrogiannis \MakeLowercase{\textit{et al.}}: Age of Incorrect Information With Hybrid ARQ Under a Resource Constraint for $N$-ary Symmetric Markov Sources}

%\IEEEoverridecommandlockouts
%\IEEEpubid{\begin{minipage}{\textwidth}\copyright\ 2024 IEEE. Personal use of this material is permitted. Permission from IEEE must be obtained for all other uses, in any current or future media, including reprinting/republishing this material for advertising or promotional purposes, creating new collective works, for resale or redistribution to servers or lists, or reuse of any copyrighted component of this work in other works. \hfill \hspace{\columnsep}\end{minipage}}

\maketitle

%\IEEEpubidadjcol

\thanks{This work is accepted for publication in the IEEE/ACM Transactions on Networking. This is the final preprint version. For the published version, refer to \url{https://ieeexplore.ieee.org/abstract/document/10767719}}

\begin{abstract}
The Age of Incorrect Information (AoII) is a recently proposed metric for real-time remote monitoring systems. In particular, AoII measures the time the information at the monitor is incorrect, weighted by the magnitude of this incorrectness, thereby combining the notions of freshness and distortion. This paper addresses the definition of an AoII-optimal transmission policy in a discrete-time communication scheme with a resource constraint and a hybrid automatic repeat request (HARQ) protocol. Considering an $N$-ary symmetric Markov source, the problem is formulated as an infinite-horizon average-cost constrained Markov decision process (CMDP). Interestingly, it is proved that, under some conditions, the optimal transmission policy is to never transmit. This reveals a region of the source dynamics where communication is inadequate in reducing the AoII. Elsewhere, there exists an optimal transmission policy, which is a randomized mixture of two discrete threshold-based policies that randomize on at most one state. The optimal threshold and the randomization component are derived analytically. Numerical results illustrate the impact of the source dynamics, channel conditions, and resource constraints on the average AoII.
\end{abstract}

% Note that keywords are not normally used for peerreview papers.
\begin{IEEEkeywords}
Remote monitoring, information freshness, age of incorrect information (AoII), hybrid automatic repeat request (HARQ), constrained Markov decision processes
\end{IEEEkeywords}

%\ifCLASSOPTIONcompsoc
\section{Introduction}\label{sec:introduction}

\IEEEPARstart{T}{he} technological advancements in sensor and monitoring devices, together with the development and widespread utilization of the 5G cellular networks and beyond, lead to the continuous emergence of new applications, whose principal element is the real-time monitoring of remote sources. The increasing list of examples includes autonomous driving, real-time video feedback, anomaly detection in critical infrastructures, remote surgery, emerging applications in augmented reality networks and haptic communications. In such applications, timely delivery of information is fundamental.

It is well understood that, while low-latency networks are necessary, they are insufficient to guarantee timely operation~\cite{bib:aoi}. This has increased the interest in the Age of Information (AoI) metrics to analyze and design such real-time applications. This new family of communication metrics captures the end-to-end latency in remote monitoring systems. Principally, the instantaneous AoI at time $t$ is defined as the difference $t\!-\!u_t\!\geq\! 0$, where $u_t$ is the generation time of the most recently successfully decoded packet. Therefore, AoI quantifies the freshness of the information content of a packet and the importance of updating the monitor with fresh information due to excess ageing.

The most important contribution of AoI has been the opening of a new perspective in the analysis and design of task-oriented communication systems. Since its introduction in~\cite{bib:aoi}, AoI has attracted the interest of researchers and engineers from many fields~\cite{bib:aoi_survey}. Nevertheless, a shortcoming of the conventional AoI metric was shortly noticed. In particular, AoI quantifies the information freshness but omits the dynamics of the data source. For example, consider a source that changes rapidly and another that changes slowly. The packets that are generated simultaneously from the two sources will have the same AoI, but obviously, the packet from the rapidly-changing source is less accurate. The initial reaction of the research community pointed towards the generalization of the conventional AoI metric with non-linear age functions~\cite{bib:nonlinear_Kosta, bib:nonlinear_Sun, bib:value}, or even arbitrary non-decreasing functions~\cite{bib:updatewait}. In~\cite{bib:stamatakis}, the authors extended AoI to measure the time elapsed since the generation of novel source values.

From another perspective, a rather expected yet fascinating observation has been made, i.e., minimizing the AoI is not necessarily equivalent to minimizing the real-time estimation error. Particularly, if the sampling times are independent of the observed source, it can be shown that the mean-squared estimation error is an increasing function of the AoI. However, if the sampling times depend on the history of the source, the estimation error is not necessarily minimized with the AoI. This was shown to happen even in the simplest signals~\cite{bib:wiener, bib:ornstein}, and similar results were derived for the real-time state estimation error of feedback control systems~\cite{bib:state1, bib:state2}.

The conventional AoI ignores the content of the communicated data. On the other hand, the traditional error metrics do not capture the amount of time the monitor's estimate is erroneous. To address these limitations, the Age of Incorrect Information (AoII) was proposed in~\cite{bib:aoii}. Particularly, the AoII is defined as an age function weighted by the real-time information mismatch (distortion), where the age function penalizes only the time that the information mismatch is non-zero. Essentially, AoII measures the time that the information at the monitor is incorrect, weighted by the magnitude of this incorrectness, rendering AoII a semantic metric that measures both the timeliness and the accuracy of the delivered information. Moreover, AoII is a suitable metric for task-oriented communications, where the age and the distortion functions can be naturally specified by the application of interest. A visual comparison of the AoII and AoI metrics is shown in Fig.~\ref{fig:aoi_vs_aoii}.

\begin{figure}
    \centering
    \includegraphics[width=.94\linewidth]{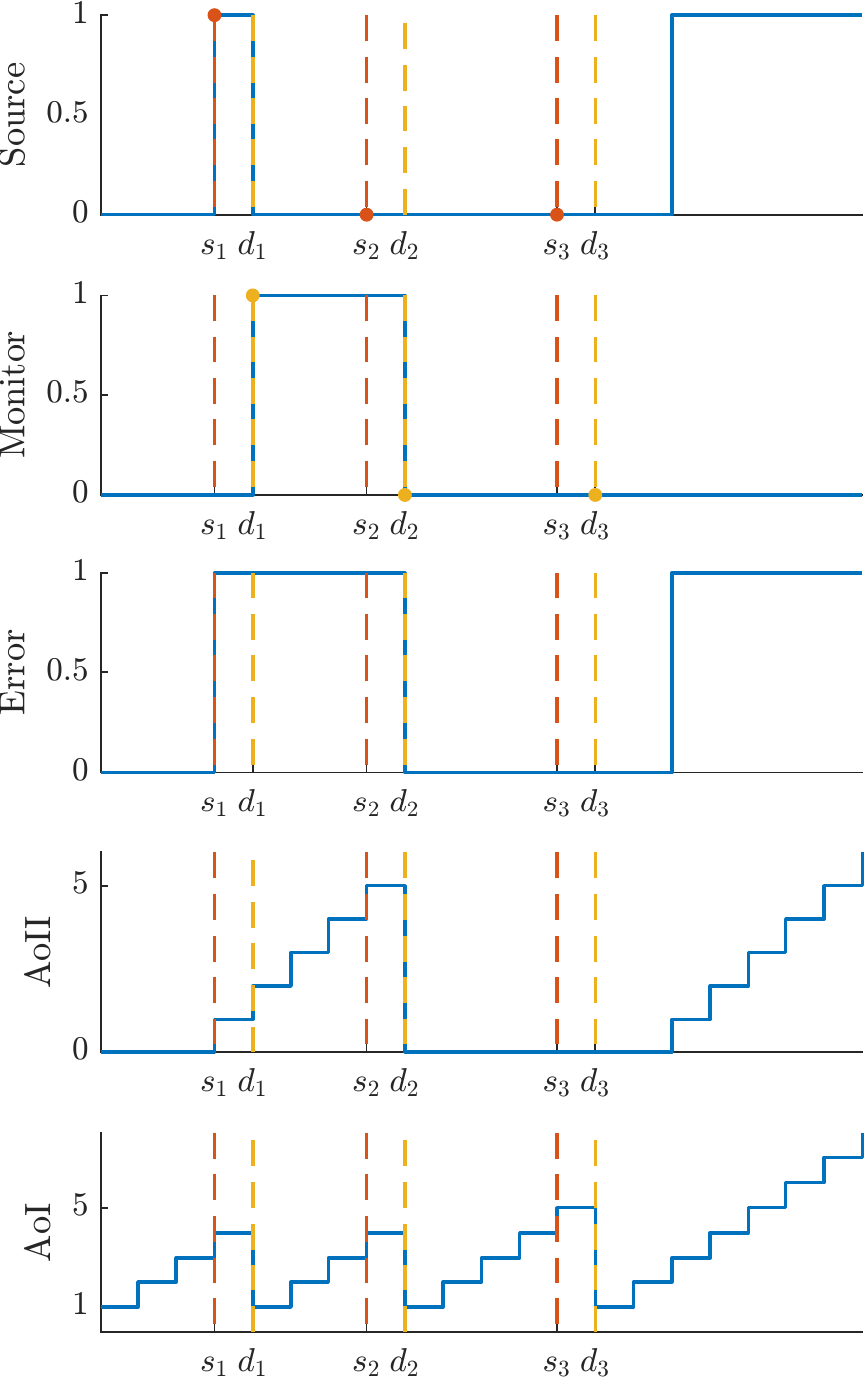}
    \caption{Snapshot of a communication system illustrating the differences between AoI and AoII. In this example, a binary \textit{source} is sampled at every time slot and is transmitted to the \textit{monitor}. The packet arrives at the next time slot and is attempted to be decoded. Successful decodings occur at time slots $d_i$, $i=1,2,3$, whereas the respective samples were generated at slots $s_i=d_i-1$. If the decoding is successful, the \textit{monitor} updates its estimate and the \textit{AoI} decreases to one (because the sample was generated one time slot ago). Otherwise, the \textit{AoI} increases. The corresponding \textit{error} indicates the mismatch between the \textit{source} and the \textit{monitor} estimate. The \textit{AoII} measures the time slots where the \textit{error} has been positive.}
    \label{fig:aoi_vs_aoii}
\end{figure}

Several works have been published since the introduction of AoII. In~\cite{bib:aoii}, the authors consider a simple indicator distortion function and study the minimization of AoII under resource constraints, whilst in~\cite{bib:aoii_discrete}, the distortion function has multiple thresholds. In~~\cite{bib:aoii_semantics2} the results are generalized with task-oriented age functions. In~\cite{bib:aoii_pl}, the authors analyze the average AoII for a piecewise linear signal where the transmitter updates the monitor for slope changes. The authors in~\cite{bib:aoii_multiple_sources2} consider a system with multiple sources where the scheduler is at the side of the receiver. In~\cite{bib:aoii_multiple_sources1}, a similar problem is studied where the scheduler has imperfect channel state information. The work in~\cite{9686027} considers a resource-constrained scheme with a Markov source where transitions happen only linearly, i.e. between adjacent states.

Our work focuses on a discrete-time (slotted) communication system, where each packet has a probability of being successfully decoded. The transmitter and the receiver employ a hybrid automatic repeat request (HARQ) protocol to correct communication errors. In particular, the packets are encoded using a forward error-correction code to correct communication errors at the receiver. If the decoder fails to decode the packet, it requests a re-transmission with a NACK feedback message. When HARQ is used with soft combining, the decoder combines all the received packets to improve the probability of successful decoding~\cite{bib:harq_survey,bib:harq_fading}. In the case of soft combining, a re-transmission can either consist of an identical packet (chase combining HARQ) or some complementary information to the previously transmitted packets (incremental redundancy HARQ)~\cite{bib:harq_survey}. The standard ARQ protocol is different in that the decoder only detects errors but cannot correct them, and hence discards the previous packets and considers only the most recent ones. This is similar to HARQ without soft combining but without the ability to correct any error. An important aspect of HARQ with soft combining is that the probability of successful decoding increases with the total number of packets, while with the simple HARQ and standard ARQ, it is constant.

In this work, we develop optimal scheduling policies for minimizing the average AoII in a communication system with HARQ under a resource constraint. The resource constraint is motivated by limitations on power or network resourc,
es (e.g. battery-powered sensors, allocated resources in sensor networks, etc.). The same constraint was considered in~\cite{bib:aoii_discrete, bib:aoii, bib:aoii_semantics2} for the minimization of the AoII without HARQ. A related problem is considered in~\cite{bib:aoi_harq}, where the authors minimize the simple AoI with HARQ under the same constraint. However, the authors only give an analytical solution for the standard ARQ and approximate the solution for general HARQ protocols. Notably, the analysis of AoII is a harder task than that of the traditional AoI. This stems from the inclusion of the communicated information content in the metric via the distortion function.

It is noteworthy that age-related optimal transmission policies in resource-constrained environments commonly exhibit a threshold structure, where transmissions occur only if the age exceeds a certain threshold~\cite{bib:aoi_harq,bib:aoii,bib:threshold_policy_3,bib:threshold_policy_5,bib:threshold_and_uniform_policy}. This property is iterated in our current work. Nevertheless, the threshold structure is not unique in such environments. For instance,~\cite{bib:threshold_and_uniform_policy, bib:uniform_policy_1,bib:uniform_policy_3} demonstrate the optimality of best-effort policies, where transmissions occur at uniformly spaced intervals.
%It is worth noting that age-related optimal transmission policies in resource-constrained environments commonly exhibit a threshold structure, a property that is iterated in our current work. The online policy in~\cite{bib:threshold_policy_1} is examined within a system featuring an energy harvesting sensor equipped with an infinite-sized battery, demonstrating that the policy follows a threshold structure. The same is established in~\cite{bib:threshold_policy_3} for finite-sized batteries. Additionally, the optimality of these policies under various age penalty functions is confirmed in~\cite{bib:threshold_policy_5}. For systems with unit-sized batteries and update erasures, threshold structure is also evident in online policies, as shown in~\cite{bib:threshold_policy_6}. Nevertheless, the threshold structure is not unique in such environments. For instance, the study in~\cite{bib:threshold_and_uniform_policy} demonstrates that best-effort online policies, where updates are sent at uniformly spaced intervals if energy permits, are optimal for systems with infinite batteries. Furthermore,~\cite{bib:uniform_policy_1} and~\cite{bib:uniform_policy_2} extend this finding to multihop networks for online and offline policies, respectively. Best-effort is also shown to be optimal for the infinite battery case when updates are subject to erasures, in~\cite{bib:uniform_policy_3}. The above illustrates that the optimal policy structure depends on other parameters of the problem, some of which seem to be the sampling rate

To the best of our knowledge, this is the first study that addresses the analysis of the AoII in a communication system with HARQ. The main contributions of this paper are summarized as follows: 
\begin{itemize}[leftmargin=*]
    \item We analyze AoII in a communication scheme utilizing HARQ within the confines of a resource constraint that limits the long-term average transmission rate. We explore HARQ with and without soft combining. The source model employed in this context is an $N$-ary symmetric Markov source.
    \item The transmission policy optimization problem is framed as a challenging task within the realm of infinite-horizon average-cost constrained Markov decision processes (CMDPs), which are typically known for their difficulty in achieving exact solutions. Nonetheless, an exact solution is attained by examining the inherent structural properties of the CMDP.
    \item When considering an HARQ protocol, the state space of the CMDP becomes a union of two countably infinite sets due to tracking the AoII and the transmission count. Handling infinite state spaces poses additional analytical challenges, as documented in~\cite[Sec. 4.6]{bib:bertsekas}. The primary technical complexity, different from prior research, arises from the interdependence of these variables, making it challenging to establish the cost monotonicity. Consequently, the proof of cost monotonicity employed in threshold-based solutions % (such as \cite{9518209,8437712, bib:aoi_harq, bib:aoii})
    necessitates a deeper examination of the underlying state transitions within the context of our current work.
    Additionally, for multivariate-state CMDPs, the threshold is an arbitrary function of the other variables, making it generally difficult to find an exact solution. However, we found an analytical solution, which sets our approach apart from other studies that focus on leveraging the MDP structure to enhance the convergence speed of approximate algorithms, as exemplified in references~\cite{bib:aoi_harq,8778671,8972306,9085402}.
    %Compared to the work in~\cite{bib:aoi_harq}, where the problem is solved for the AoI metric, we provide an analytical solution for any HARQ protocol, not only for the standard ARQ. This is also in contrast with other related works that employ the structure of the MDP to increase the convergence rate of approximate algorithms, rather than finding an exact solution, e.g.~\cite{8778671,8972306,9085402}.
    \item Interestingly, we demonstrate that, given certain conditions, the optimal policy is to abstain from transmission entirely. In all other scenarios, we establish that the optimal approach involves a randomized mixture of two distinct threshold-based policies. It is shown that the thresholds are independent of the packet count, further simplifying the optimal policy. We analytically deduce the precise threshold value and the randomization component that align with the resource constraints.
    \item Extensive simulations are conducted to investigate how the average AoII is influenced by the source dynamics, channel conditions, and resource constraints when employing the optimal policy.
\end{itemize}

The rest of this paper is organized as follows: in Section~\ref{sec:Problem}, the system model is presented and the problem is formulated as a CMDP. Then, the constrained problem is expressed as an unconstrained Lagrangian MDP. Section~\ref{sec:Lagrangian} analyzes the structural properties of the Lagrangian MDP and derives the Lagrange-optimal transmission policy. Section~\ref{sec:CMDP} shows that the optimal policy of the constrained problem is a randomized mixture of two Lagrange-optimal policies. Section~\ref{sec:Lagrangian_proofs} provides the proofs of the structural properties given in Sec.~\ref{sec:Lagrangian}. Section~\ref{sec:algorithm} describes an efficient algorithm that computes the optimal policy, and Section~\ref{sec:results} gives numerical results of the average AoII under the optimal policy with varying model parameters. Finally, Section~\ref{sec:conclusion} summarizes the main outcomes and presents directions for further extensions.

\section{Problem Definition}\label{sec:Problem}

\subsection{Communication model}

We consider a discrete-time (slotted) communication model over a noisy channel. The transmitter monitors a data source from which fresh samples arrive at every time slot. At each time slot, the transmitter decides whether to transmit or discard the fresh sample. The samples are encoded with a channel coding scheme and sent through the channel to the receiver. Upon receiving the packet, the receiver attempts to decode it. If the decoding is successful, the receiver notifies the transmitter with an ACK packet. Otherwise, it sends a NACK message to ask for additional information. The transmitter decides whether it sends additional information or rejects the request.

We assume that the channel states of each time slot are independent and identically distributed. Furthermore, we assume that the duration of a packet transmission is constant and equal to one time slot, whilst the ACK/NACK packets are instantaneous. The instantaneous feedback message is a typical assumption in the literature, justified by its limited information content. The constant delay has also been adopted in many other works (e.g.~\cite{bib:aoi_harq, bib:aoii, bib:constant_delay_1, bib:constant_delay_2, bib:constant_delay_3, bib:constant_delay_4}). Note that in typical networks there are two major sources of randomness in the delay: a) queues formed by packet congestion in relay nodes and b) erroneous packets that need to be re-transmitted. In our work, we abstract the stochasticity of the delay due to erroneous packets and embed it in a model with a constant delay and a specified probability of successful decoding. Therefore, our model is close to reality when the randomness incurred by queues is negligible.

The probability of successful decoding at each time slot is specified by a non-decreasing function $p(r)\in (0,1)$, where $r\in \mathbbm{N}$ is the number of packets already gotten by the receiver. For example, the probability of successful decoding for the first packet is given by $p(0)$, for the second packet is given by $p(1)$, and so on. Typically, the maximum number of packets is limited and only a total of $r_{max}$ re-transmissions are allowed~\cite{bib:harq_survey}.

Our objective is to minimize the average AoII. In particular, let $g(X_t,\hat{X_t})$ denote the distortion function at time $t$ between the source $X_t$ and its estimation $\hat{X_t}$ at the receiver. In general, $g(X_t,\hat{X_t})$ is directed by the specific application of interest. Moreover, the age function is defined as
\begin{equation}
    \Delta_t \triangleq t-h_t\,,
\end{equation}
where $h_t$ is the last time instant when the distortion $g(X_t,\hat{X_t})$ was zero. The instantaneous AoII at time $t$ is simply the product of the age and distortion functions,
\begin{equation}\label{eq:aoii_def}
    \delta_t \triangleq \Delta_t \cdot g(X_t,\hat{X_t}).
\end{equation}
We shall generalize the AoII process in~\eqref{eq:aoii_def} by employing a generic monotonically increasing\footnote{The condition of strict monotonicity is a bit more restrictive than the weak monotonicity adopted in~\cite{bib:aoii_semantics2, bib:aoii_multiple_sources1}, but simplifies the proofs of our results.} and unbounded penalty function $f(\delta_t)$, i.e., $f(\delta')>f(\delta)\ \forall\ \delta'>\delta$ and $\lim_{\delta\to+\infty}f(\delta)=+\infty$.

Lastly, motivated by requirements on saving or allocating power and network resources, we impose a constraint on the transmission rate by requiring the long-term transmission rate to not exceed $R\in(0,1]$.

\subsection{Source Model}

Hereafter, we focus on $N$-ary symmetric Markov sources, as illustrated in Fig.~\ref{fig:source}. Here, $\alpha$ is the probability that the source remains at the same state at the next time slot, and $\mu$ is the probability of transition to all other $N-1$ states. The same model is considered in~\cite{bib:aoii} and~\cite{bib:aoii_multiple_sources2}.

\begin{figure}
    \centering
    \includegraphics[width=\linewidth]{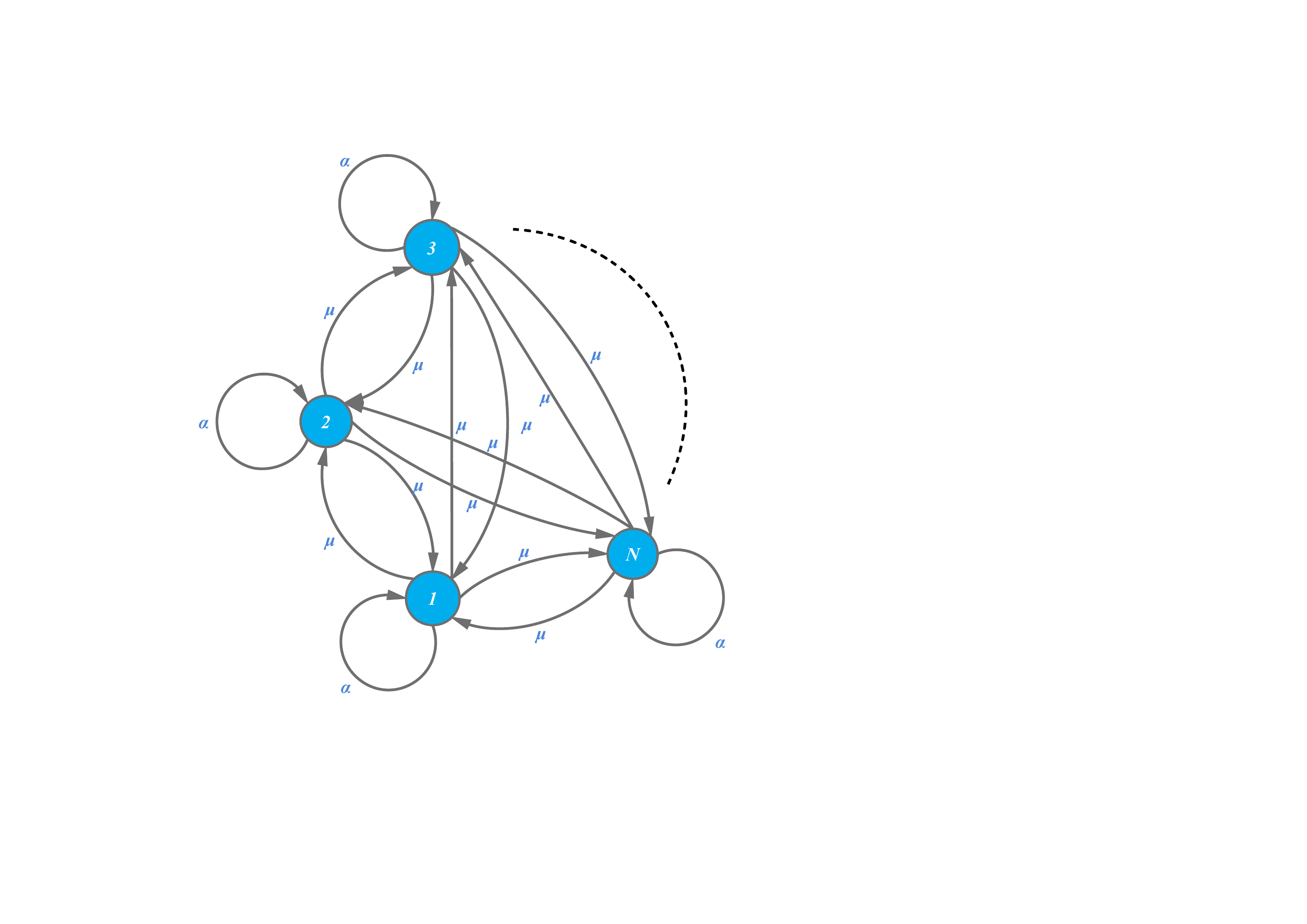}
    \caption{The symmetric Markov data source under consideration.}
    \label{fig:source}
\end{figure}

Due to the normalization property of the transition probabilities, the following equality must hold,

\begin{equation}\label{eq:mu_alpha_relation}
    (N-1)  \mu + \alpha = 1.
\end{equation}
In our analysis, we employ an indicator distortion function defined as follows,

\begin{equation}
    g(X_t,\hat{X_t})=\mathbbm{1}_{\{X_t\neq \hat{X_t}\}}\,.
\end{equation}
This distortion function penalizes equally any information mismatch between the source and the monitor. The symmetric Markov source implies that the source changes (or gives new information) every $Y$ time slots, where $Y$ is geometrically distributed, with the same parameter for all states. Note that the indicator cost function may also arise by truncating other more complex distortion functions, i.e., it can model the function $\mathbbm{1}_{\{d(X_t, \hat{X_t})>c\}}\,$, where $d(X_t, \hat{X_t})$ is any distortion function and $c$ is a specified threshold. In this case, we lose some information but gain the tractability of the optimization problem.

Although we focus on finite-state Markov chains, our results also hold for infinite-state Markov chains. This adds to the versatility of our results. We note two interesting cases in this matter. Firstly, suppose that $\alpha\!=\!0$ and $N\!\rightarrow\!\infty$. Thus, the source gives new information at every time slot, but the information at two different time slots is never the same. This implies that the AoII falls back to the simple AoI. Secondly, suppose that the source is a finite-state Markov chain with a fixed probability of staying at the same state $\alpha\!>\!0$. Now construct another Markov chain that is infinite-state but with the same probability of staying at the same state $\alpha$. We have $\alpha\!>\!0,\ N\!\rightarrow\!\infty,\ \mu\!\rightarrow\! 0$. With the constructed Markov chain, once the source departs from a state, the probability of returning to it in the future is zero. This reflects the scenario where the AoII penalty is zero if there is no new information but always increases otherwise, even if the real source eventually returns to the known value. This model can be used when the generation time of the information is part of the information itself.

\subsection{Mathematical Formulation}\label{sec:MDP}

We formulate the problem as an infinite-horizon average-cost constrained Markov decision process (CMDP). The next subsections define the CMDP and present the necessary assumptions.

\subsubsection{Definition of the Constrained Markov Decision Process}

Before we define the CMDP, we highlight a remark that simplifies the definition of the problem. In particular, the transmitter is not obliged to fulfil the request for a re-transmission. However, should a re-transmission occur, it is better to do it immediately after receiving the NACK message than wait some time. This is true since waiting i) increases the age and ii) incurs the chance of the packet becoming obsolete due to a change of the source, all without increasing the probability of decoding. A similar observation has been made in the study of AoI with HARQ~\cite{bib:aoi_harq}.

\begin{remark}\label{rem:wait_count}
    The AoII-optimal transmission policy incurs an HARQ re-transmission only immediately after the reception of a NACK feedback message.
\end{remark}

Therefore, we presume that if the transmitter decides to not fulfil the re-transmission request immediately, the request is rejected altogether and the transmission count $r$ reverts to zero.

That being so, the CMDP is defined as follows:
\begin{itemize}[leftmargin=*]
    \item The state of the CMDP at time $t$ is given by $S_t=(\delta_t,r_t) \in \mathcal{S}$, where $\delta_t$ is the AoII at the current time slot and $r_t$ is the transmission count for the current source state.
    \item The cost at state $S_t$ is equal to the instantaneous AoII penalty $f(\delta_t)$.
    \item The actions $y_t\in \mathcal{Y}$, where the action space $\mathcal{Y}=\{0,1\}$ consists of the ``wait'' ($y_t=0$) and the ``transmit'' ($y_t=1$) actions.
    \item Define the functions $\gamma_1(r_t)$, $\gamma_2(r_t)$ as
    \begin{align}
        &\gamma_1(r_t) \triangleq \alpha (1-p(r_t)),\label{eq:gamma1}\\
        &\gamma_2(r_t) \triangleq 1-\alpha - \mu(1-p(r_t)).\label{eq:gamma2}
    \end{align}
    
    The transition probabilities are summarized as follows, while a detailed derivation is given in the Appendix.
    ~\\
    
    If action is ``wait'' ($y_t=0$):
    \begin{equation}\label{eq:transition_y0}
        \begin{aligned}
        &P(S_{t+1}{=}(\delta_t{+}1,0)\mid\delta_t>0,r_t\geq 0, y_t = 0) = 1-\mu\,,\\
        &P(S_{t+1}{=}(0,0)\mid\delta_t>0,r_t\geq 0, y_t = 0) = \mu\,,\\
        &P(S_{t+1}{=}(1,0)\mid\delta_t=0,r_t= 0, y_t = 0) = 1-\alpha\,,\\
        &P(S_{t+1}{=}(0,0)\mid\delta_t=0,r_t= 0, y_t = 0) = \alpha\,.\\\omit\hfill
        \end{aligned}
    \end{equation}
    If action is ``transmit'' ($y_t=1$):
    \begin{equation}\label{eq:transition_y1}
        \begin{aligned}
        &P(S_{t+1}{=}(\delta_t{+}1,r_t{+}1)\mid\delta_t>0,r_t\geq 0, y_t = 1) = \gamma_1(r_t)\,,\\ 
        &P(S_{t+1}{=}(\delta_t{+}1,0)\mid\delta_t>0,r_t\geq 0, y_t = 1) = \gamma_2(r_t)\,,\\
        &P(S_{t+1}{=}(0,0)\mid\delta_t\!>\!0,r_t\!\geq\! 0, y_t\!=\!1) = 1\!-\!\gamma_1(r_t)\!-\!\gamma_2(r_t)\,,\\
        &P(S_{t+1}{=}(1,0)\mid\delta_t=0,r_t= 0, y_t = 1) = 1-\alpha\,,\\
        &P(S_{t+1}{=}(0,0)\mid\delta_t=0,r_t= 0, y_t = 1) = \alpha\,.
        \end{aligned}
    \end{equation}
    \item The long-term average number of ``transmit'' actions is constrained to not exceed $R\in(0,1]$.

\end{itemize}

\subsubsection{Additional Assumptions}

To derive our results, it is necessary to impose the following condition, which ensures that the average AoII is finite under the policy that a new transmission is performed in every time slot,

\begin{equation}\label{eq:bound_cond}
    \sum_{l=1}^\infty f(l{+}1)(\gamma_1(0)+\gamma_2(0))^{l} < \infty\,.
\end{equation}

Furthermore, without loss of generality, we assume an unlimited maximum number of allowed re-transmissions, i.e., $r_{max}=\infty$. In Section~\ref{sec:results}, we impose a finite $r_{max}$ without affecting the theoretical results.

\subsubsection{Definition of the Optimization Problem}

The optimization problem pertains to finding the policy $\pi:\mathcal{S}\mapsto\mathcal{Y}$ that minimizes the long-term average AoII, while not exceeding the transmission rate constraint. The problem can be expressed as a linear programming problem, as follows,

\begin{definition}[Main CMDP Problem]
\begin{equation}\label{eq:opt_problem}
\begin{aligned}
    \text{Minimize}\quad &J_\pi(S_0) \!\triangleq\! \limsup\limits_{T\rightarrow \infty}\frac{1}{T}E_{\pi}\!\left[\sum_{t=0}^{T-1}f(\delta_t)\mid S_0\right]\\
    \text{subject to}\quad &C_\pi(S_0) \!\triangleq\! \limsup\limits_{T\rightarrow \infty}\frac{1}{T}E_{\pi}\!\left[\sum_{t=0}^{T-1}y_t\mid S_0\right]\!\leq\! R\,.
\end{aligned}
\end{equation}
\end{definition}

To solve the constrained problem, we introduce the Lagrangian average cost and solve the relaxed problem,

\begin{definition}[Lagrangian MDP Problem]
\begin{equation}\label{eq:lagrange_problem}
    \text{Minimize}\; J(\pi{,}\lambda) {\triangleq} {\lim\limits_{T\rightarrow \infty}}\!\sup\limits_{\lambda\geq 0}\!\frac{1}{T}E_{\pi}{\left[\sum_{t=0}^{T-1}f(\delta_t){+}\lambda y_t\mid S_0\right]} {-} \lambda R\,.
\end{equation}
\end{definition}
\noindent For any fixed value of $\lambda$, let
\begin{equation}
    \pi_\lambda \triangleq \argmin_\pi J(\pi,\lambda)\,,
\end{equation}
\begin{equation}
    g \triangleq \min_\pi J(\pi,\lambda)\,,
\end{equation}
denote the optimal policy of the Lagrangian MDP and the average cost achieved by the optimal policy, respectively.

The following lemma provides the means for an alternative mathematical formulation.
\begin{lemma}\label{lem:unichain}
The MDP~\eqref{eq:lagrange_problem} is unichain. That is, there exists a single recurrent class and a (possibly empty) transient class.
\end{lemma}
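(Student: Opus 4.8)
The plan is to show that the Markov chain induced by any stationary deterministic policy $\pi_\lambda$ has a single recurrent class by identifying an explicit state that is reachable from every other state and is itself recurrent. The natural candidate is the state $(0,0)$, i.e.\ zero AoII and zero transmission count. First I would argue reachability: from the transition laws \eqref{eq:transition_y0}--\eqref{eq:transition_y1}, from every state $(\delta,r)$ with $\delta>0$ there is strictly positive probability of jumping directly to $(0,0)$ in one step, regardless of the action chosen --- under ``wait'' this probability is $\mu>0$, and under ``transmit'' it is $1-\gamma_1(r)-\gamma_2(r)=\mu+(\alpha+\mu)p(r)-\mu p(r)\cdot 0$, which one checks is strictly positive since $p(r)\in(0,1)$ and $\mu>0$ (I would write out the elementary inequality). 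From states $(0,0)$ the chain stays in $\{(0,0),(1,0)\}$, so $(0,0)$ is trivially reachable from there. Hence $(0,0)$ is accessible from every state in $\mathcal{S}$ under any policy.

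Next I would establish that $(0,0)$ is recurrent. Since $(0,0)$ self-loops with probability $\alpha>0$ (or, if $\alpha=0$, the chain goes $(0,0)\to(1,0)\to\cdots$ and returns to $(0,0)$ with the positive probabilities just identified), the state communicates with a class that the chain cannot leave permanently: whenever the chain is at any state it returns to $(0,0)$ with probability bounded below, because from any $(\delta,r)$ the one-step probability of hitting $(0,0)$ is at least $\min\{\mu,\,1-\gamma_1(0)-\gamma_2(0)\}>0$ uniformly (using monotonicity of $p$ to bound $\gamma_1,\gamma_2$), so the return time to $(0,0)$ is stochastically dominated by a geometric random variable and is therefore finite almost surely. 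This makes $(0,0)$ positive recurrent and, together with the accessibility from all states, forces every recurrent state to lie in the single communicating class of $(0,0)$; all remaining states are transient. I would also invoke condition \eqref{eq:bound_cond} only where needed for finiteness of the average cost, noting that unichain-ness itself follows purely from the positivity of $\mu$ and $p(r)\in(0,1)$.

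Finally, I would note that the statement must hold for \emph{every} stationary policy (this is what ``the MDP is unichain'' means in the Lagrangian formulation of \cite{bib:bertsekas}), and the argument above used only the structure of the transition kernels and the fact that both actions at a state with $\delta>0$ leak mass into $(0,0)$; since this is action-independent in the relevant qualitative sense, the conclusion is uniform over policies. I would phrase the conclusion as: the recurrent class is the communicating class containing $(0,0)$, and everything else (states with large $\delta$ or large $r$ that the policy may make unreachable) is transient, which is exactly the claimed unichain structure.

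The main obstacle I anticipate is handling the \emph{infinite} state space carefully: in a countably infinite chain, a single accessible state that self-loops is not automatically recurrent without a uniform return-probability bound, so the crux is the uniform lower bound $\min\{\mu,\,1-\gamma_1(0)-\gamma_2(0)\}>0$ on the one-step hitting probability of $(0,0)$, which requires using the monotonicity of $p(\cdot)$ to control $\gamma_1(r),\gamma_2(r)$ simultaneously for all $r$. Getting that bound right --- and confirming it does not degenerate in the boundary cases $\alpha=0$ or $N\to\infty$ discussed in the source-model subsection --- is where the real care is needed; the rest is bookkeeping with \eqref{eq:transition_y0}--\eqref{eq:transition_y1}.
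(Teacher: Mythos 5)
Your proof is correct and follows essentially the same route as the paper's: both rest on the observation that from every state, under either action, there is strictly positive one-step probability of reaching $(0,0)$ --- the paper's entire proof is exactly this one-line observation, and your added care about the infinite state space (the uniform lower bound and the geometric domination of the return time) is a strengthening the paper leaves implicit. One small correction: $1-\gamma_1(r)-\gamma_2(r)=\alpha p(r)+\mu(1-p(r))$, so the clean uniform one-step bound is $\min\{\alpha,\mu\}$ when both are positive (with the degenerate cases $\alpha=0$ or $\mu=0$ handled separately, where the reachable $r$'s are in fact bounded), rather than the expression you wrote.
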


\begin{proof}
    From the transition probabilities~\eqref{eq:transition_y0}, \eqref{eq:transition_y1}, we observe that, under any policy, there is a positive probability from every state to transit to state $S=(0,0)$. This suffices to prove the statement.% In addition, starting from the state $S=(0,0)$, there is a probabilistic path to every other state. These two conditions prove the statement.
\end{proof}

Due to Lemma~\ref{lem:unichain} and~\cite[Thm. 6.5.2]{bib:krishnamurthy}, the optimal policy $\pi_\lambda$ can be found by solving the following Bellman equations,

\begin{equation}\label{eq:bellman_def}
    g + V(S_t) = \min_{y_t}\{f(\delta_t)+\lambda y_t+\sum_{S_{t+1}}P(S_{t+1}\!\mid\! S_t, y_t)V(S_{t+1})\}.
\end{equation}
The function $V(S_t)=V(\delta_t,r_t)$ is called the \emph{value function} of state $S_t$.

In the following, Section~\ref{sec:Lagrangian} elaborates on the Lagrangian MDP, and Section~\ref{sec:CMDP} reverts to the main CMDP to define the optimal policy. The proofs of the results of Sec.~\ref{sec:Lagrangian} are written in Sec.~\ref{sec:Lagrangian_proofs}, while the results of Sec.~\ref{sec:CMDP} are given in the Appendix.

\section{Structural Results For The Lagrangian Problem}~\label{sec:Lagrangian}

This section gives the structural properties of the Lagrangian MDP. All results given here are proved in Section~\ref{sec:Lagrangian_proofs}.

The following two lemmas characterize how varying $\delta$ or $r$ affects the value function $V(\delta,r)$ and are necessary for proving the structural property of the optimal transmission policy.
\begin{lemma}\label{lem:value_delta}
The function $V(\delta,r)$ is increasing w.r.t. $\delta$.
\end{lemma}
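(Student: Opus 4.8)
The plan is to split the claim into two regimes that behave quite differently: the ``bulk'' regime $\delta\ge 1$, where I show $V(\delta,r)\le V(\delta{+}1,r)$ for every fixed $r$ by a value-iteration induction, and the single ``boundary'' comparison $V(0,0)\le V(1,0)$ at $r=0$, which I handle separately through an identity read off the Bellman equation~\eqref{eq:bellman_def}. Combining the two gives that $\delta\mapsto V(\delta,r)$ is non-decreasing on its whole domain, which is $\{1,2,\dots\}$ when $r\ge 1$ and $\{0,1,2,\dots\}$ when $r=0$.

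For the bulk regime, let $V^{(0)}\equiv 0$ and let $V^{(n+1)}$ be produced by (relative) value iteration applied to the operator in~\eqref{eq:bellman_def}; by Lemma~\ref{lem:unichain}, aperiodicity (the self-loop at $(0,0)$ has probability $\alpha>0$), and the finiteness ensured by~\eqref{eq:bound_cond}, the iterates converge to $V$, so it suffices to prove that each $V^{(n)}(\cdot,r)$ is non-decreasing on $\{\delta\ge 1\}$ and pass to the limit. The base case is trivial. For the step, the crucial structural fact from~\eqref{eq:transition_y0}--\eqref{eq:transition_y1} is that, when $\delta\ge 1$, the transition kernel depends only on $(r,y)$, not on $\delta$: under either action the only reachable next states are $(0,0)$ -- reached with the \emph{same} probability from $(\delta,r)$ and from $(\delta{+}1,r)$ -- and states with first coordinate $\delta{+}1$, respectively $\delta{+}2$, whose second coordinates (each in $\{0,r{+}1\}$) match up pairwise. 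Hence, for a fixed action $y$, the value at $(\delta{+}1,r)$ minus the value at $(\delta,r)$ equals $\bigl(f(\delta{+}1)-f(\delta)\bigr)$ plus a convex combination of differences $V^{(n)}(\delta{+}2,\cdot)-V^{(n)}(\delta{+}1,\cdot)$; both terms are nonnegative, the first because $f$ is (strictly) increasing and the second by the induction hypothesis applied at the coordinates $0$ and $r{+}1$ (both lying in the region $\delta\ge 1$). Since taking $\min$ over $y\in\{0,1\}$ preserves the inequality, $V^{(n+1)}(\delta,r)\le V^{(n+1)}(\delta{+}1,r)$.

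For the boundary comparison, observe that at $(0,0)$ the two actions induce the same transition (both rows of~\eqref{eq:transition_y0}--\eqref{eq:transition_y1} with $\delta=0,r=0$ coincide), so~\eqref{eq:bellman_def} reduces to $g+V(0,0)=f(0)+(1-\alpha)V(1,0)+\alpha V(0,0)$, i.e., $(1-\alpha)\bigl(V(1,0)-V(0,0)\bigr)=g-f(0)$. As the per-slot cost $f(\delta_t)$ is always at least $f(0)$, the optimal average cost satisfies $g\ge f(0)$, whence $V(0,0)\le V(1,0)$. If one prefers to avoid relying on convergence of undiscounted value iteration, the identical argument runs on the $\beta$-discounted value function $V_\beta$ (finite by~\eqref{eq:bound_cond}, the unique fixed point of the discounted Bellman operator and the limit of discounted value iteration from $0$): the bulk-regime induction carries over verbatim, the boundary step now uses that the discounted cost-to-go from any state is at least $f(0)/(1-\beta)$, and then $V=\lim_{\beta\uparrow 1}\bigl(V_\beta-V_\beta(0,0)\bigr)$ inherits monotonicity by the standard vanishing-discount argument available under Lemma~\ref{lem:unichain}.

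The step I expect to be the real obstacle is the boundary comparison $V(0,0)\le V(1,0)$: a naive term-by-term monotonicity argument fails there because the chain leaves $\delta=0$ with probabilities $(1-\alpha,\alpha)$ but leaves $\delta\ge 1$ with probabilities $(1-\mu,\mu)$, and these cannot be ordered without knowing the sign of $\alpha-\mu$; this is exactly why that case must be settled by the Bellman identity together with the cost lower bound $g\ge f(0)$ rather than by the induction. The only other point requiring care is the routine-but-nontrivial justification of value-iteration convergence on a countably infinite state space with the unbounded penalty $f$, which is precisely what condition~\eqref{eq:bound_cond} is there to provide.
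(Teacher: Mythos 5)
Your proof is correct, and its core engine—an induction along value iteration using the monotonicity of $f$ and the fact that, for $\delta\ge 1$, the one-step kernel depends only on $(r,y)$ so the successor states of $(\delta,r)$ and $(\delta{+}1,r)$ pair up with identical probabilities—is the same as the paper's. Where you genuinely depart is at the state $(0,0)$: the paper anchors relative value iteration there, sets $V(0,0)=0$, and runs a single induction over all states, which quietly skates over the fact that the exit probabilities from $\delta=0$ are $(\alpha,1-\alpha)$ while those from $\delta\ge 1$ are $(1-\mu,\mu)$, so the term-by-term comparison it invokes does not literally apply to the pair $(0,0)$ versus $(1,0)$ when $\mu>\alpha$. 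You isolate exactly this comparison and settle it with the identity $(1-\alpha)\bigl(V(1,0)-V(0,0)\bigr)=g-f(0)$ together with $g\ge f(0)$, which is precisely the paper's own equation~\eqref{eq:bellman_00_simple} after the normalization $V(0,0)=0$; this is a cleaner and more defensible treatment of the boundary, and your diagnosis of why the naive argument fails there is on point. Two small remarks. First, the lemma asserts \emph{strict} increase, and strictness is actually used downstream (the proof of Proposition~\ref{prop:threshold} needs $\partial V/\partial\delta>0$ to conclude $\partial\Delta V/\partial\delta<0$); your bulk step already gives strictness for free since $f(\delta{+}1)-f(\delta)>0$, and at the boundary you only need to add that $g>f(0)$ strictly, which holds because $\alpha<1$ forces the chain to spend a positive fraction of time at states with $\delta\ge 1$ where the stage cost is at least $f(1)>f(0)$. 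Second, your hedge via the discounted problem and the vanishing-discount limit is a legitimate way to shore up the convergence of undiscounted value iteration on this countable state space, a point the paper simply asserts.
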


\begin{lemma}\label{lem:value_r}
If $\mu<\alpha$, the function $V(\delta,r)$ is non-increasing w.r.t. $r$. If $\mu\geq\alpha$, it is non-decreasing.
\end{lemma}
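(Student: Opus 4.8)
The plan is to prove the monotonicity in $r$ by the standard value-iteration (successive approximation) argument: define $V_0 \equiv 0$ and $V_{n+1}(\delta,r) = \min_{y}\{f(\delta) + \lambda y + \sum_{S'} P(S'\mid (\delta,r),y) V_n(S')\} - g$, so that $V_n \to V$ pointwise (this convergence is justified by the unichain property, Lemma~\ref{lem:unichain}, together with the finiteness condition~\eqref{eq:bound_cond}). It then suffices to show, by induction on $n$, that $V_n(\delta,r)$ is non-increasing in $r$ when $\mu < \alpha$ and non-decreasing in $r$ when $\mu \geq \alpha$; the claimed property for $V$ follows by taking limits. The base case $n=0$ is trivial. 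For the induction step I would fix $\delta$ and compare $V_{n+1}(\delta,r+1)$ with $V_{n+1}(\delta,r)$.

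The key observation driving the step is that $r$ enters the dynamics only through $p(r)$, hence only through the two quantities $\gamma_1(r) = \alpha(1-p(r))$ and $\gamma_2(r) = 1-\alpha-\mu(1-p(r))$; and since $p$ is non-decreasing, increasing $r$ decreases $1-p(r)$, which \emph{decreases} $\gamma_1(r)$ and \emph{increases} $\gamma_2(r)$, while their sum $\gamma_1(r)+\gamma_2(r) = 1-(1-p(r))(\alpha+\mu)$ \emph{increases} (more mass is shifted onto the reset state $(0,0)$). For a state with $\delta>0$ and the action $y=1$, the post-action distribution over $\{(\delta{+}1,r{+}1),(\delta{+}1,0),(0,0)\}$ therefore changes only by moving probability weight: as $r$ increases, weight $q := \gamma_1(r) - \gamma_1(r')$ (with $r' = r+1$, using monotonicity of $p$) moves from $(\delta{+}1,\cdot)$-type states toward $(0,0)$, and additional weight moves from $(\delta{+}1,0)$ to $(0,0)$. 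Since Lemma~\ref{lem:value_delta} gives $V_n(0,0) \leq V_n(\delta{+}1,\cdot)$, shifting mass onto $(0,0)$ can only decrease the expected continuation cost; this handles the direction of the inequality coming from $\gamma_1+\gamma_2$. The remaining subtlety is the \emph{internal} reallocation between $(\delta{+}1,r{+}1)$ and $(\delta{+}1,0)$ at fixed total weight on $\delta{+}1$: here the sign depends on whether $V_n(\delta{+}1,r{+}1)$ is larger or smaller than $V_n(\delta{+}1,0)$, which by the induction hypothesis is controlled precisely by the sign of $\mu - \alpha$ — this is where the case split enters. Under $\mu < \alpha$, $V_n(\cdot,r)$ is non-increasing in $r$, so $V_n(\delta{+}1,r{+}1) \leq V_n(\delta{+}1,0)$ fails in the wrong direction, and one must instead combine both reallocations; I would write $\gamma_1(r)V_n(\delta{+}1,r{+}1) + \gamma_2(r)V_n(\delta{+}1,0)$ at level $r$ versus $\gamma_1(r')V_n(\delta{+}1,r{+}2) + \gamma_2(r')V_n(\delta{+}1,0)$ at level $r+1$ and bound the difference term by term using the induction hypothesis and Lemma~\ref{lem:value_delta}, checking that the coefficients assemble with the right sign.

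Finally, the $\min$ over $y$ is handled by the usual argument that a pointwise inequality between the two bracket-quantities (for each fixed $y$) is preserved under minimization, and by noting that for $\delta = 0$ we necessarily have $r=0$ so there is nothing to compare, while the $y=0$ branch does not involve $r$ at all, so only the $y=1$, $\delta>0$ branch carries content. I expect the main obstacle to be the $\mu<\alpha$ case of the internal reallocation: there the two mass-shift effects push the continuation cost in opposite directions, and one has to verify that the net coefficient in front of the ``bad'' term is dominated — concretely, that the decrease in $\gamma_1$ is large enough relative to $V_n(\delta{+}1,r{+}2) - V_n(0,0)$, or, more cleanly, to argue monotonicity by grouping the transition kernel as a mixture of the ``reset'' kernel and an ``$r$-advancing'' kernel and using that the mixing weight itself is monotone in $r$. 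I would look to structure the algebra so that the induction hypothesis is applied to $V_n(\delta{+}1,\cdot)$ (which has strictly smaller... — rather, the same — age, so the hypothesis is directly available) and Lemma~\ref{lem:value_delta} supplies $V_n(0,0)\le V_n(\delta{+}1,0)$, making every grouped coefficient manifestly sign-definite.
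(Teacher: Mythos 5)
There is a genuine gap at precisely the point you flag as the ``main obstacle,'' and the ingredients you propose to close it (the induction hypothesis plus Lemma~\ref{lem:value_delta}) are not sufficient. Carrying out your comparison of the $y=1$ branches at levels $r$ and $r+1$, and using $V_n(0,0)=0$, the inequality you must establish for the induction step under $\mu<\alpha$ reduces to
\begin{equation*}
\gamma_1(r)\,V_n(\delta{+}1,r{+}1)-\gamma_1(r{+}1)\,V_n(\delta{+}1,r{+}2)\;\geq\;\bigl(\gamma_2(r{+}1)-\gamma_2(r)\bigr)\,V_n(\delta{+}1,0)=\mu\bigl(p(r{+}1)-p(r)\bigr)V_n(\delta{+}1,0).
\end{equation*}
Bounding the left side from below via the induction hypothesis ($V_n(\delta{+}1,r{+}2)\leq V_n(\delta{+}1,r{+}1)$) gives at best $\alpha\bigl(p(r{+}1)-p(r)\bigr)V_n(\delta{+}1,r{+}1)$, so what you actually need is $\alpha\,V_n(\delta{+}1,r{+}1)\geq \mu\,V_n(\delta{+}1,0)$. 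This is a \emph{quantitative lower} bound on $V_n(\delta{+}1,r{+}1)$ relative to $V_n(\delta{+}1,0)$, whereas the induction hypothesis in the case $\mu<\alpha$ only tells you $V_n(\delta{+}1,r{+}1)\leq V_n(\delta{+}1,0)$ — i.e., it pushes in the wrong direction and says nothing about how much smaller the former can be. Lemma~\ref{lem:value_delta} (monotonicity in $\delta$) does not help either. So ``checking that the coefficients assemble with the right sign'' fails as stated; the two mass-shift effects genuinely compete and an additional estimate is required.

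The paper's proof supplies exactly this missing estimate, though via a different overall route: instead of value iteration it uses a policy-coupling argument (apply the optimal action sequence of the low-$r$ state to the high-$r$ state), unrolls the resulting recursion through powers of the transmission-count transition matrix $P$, and reduces the lemma to showing that the row sums $\sum_i P^s_{r,i}$ are non-increasing in $r$. There the same competition appears as the requirement $\alpha\sum_i P^h_{r+1,i}-\mu\sum_i P^h_{0,i}\geq 0$, and it is resolved by bounding the row sums between the extreme cases $p\equiv 1$ and $p\equiv 0$, yielding $\alpha\sum_i P^h_{r+1,i}-\mu\sum_i P^h_{0,i}\geq (1-\mu)^{h+1}-(1-\alpha)^{h+1}>0$ precisely because $\mu<\alpha$. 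If you want to keep your value-iteration scaffolding, you would need to carry an analogous quantitative invariant through the induction (e.g., a lower bound of the form $\alpha V_n(\delta,r{+}1)\geq\mu V_n(\delta,0)$, or to strengthen the inductive statement to the row-sum property itself), which is where essentially all of the work in the paper's proof lives. A secondary, more minor issue: convergence of plain $V_{n+1}=TV_n-g$ on this countably infinite average-cost MDP needs care; the paper anchors a relative value iteration at $(0,0)$ for this purpose.
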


Our first main result is the following proposition:
\begin{proposition}\label{prop:threshold}
Given a fixed transmission count $r$, if $\mu<\alpha$, the optimal policy at state $S=(\delta,r)$ is threshold-based w.r.t. $\delta$. If $\mu\geq\alpha$, the optimal policy is to always wait.
\end{proposition}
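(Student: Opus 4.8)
The plan is to read the optimal action off the Bellman equation~\eqref{eq:bellman_def}. Fix $r\geq 0$. For $\delta=0$ (where only $r=0$ occurs) the ``wait'' and ``transmit'' actions induce identical transitions, so ``wait'' is always (weakly) optimal; it thus remains to compare the two actions at states $(\delta,r)$ with $\delta>0$. Using the transition probabilities~\eqref{eq:transition_y0}--\eqref{eq:transition_y1} and the definitions~\eqref{eq:gamma1}--\eqref{eq:gamma2}, a direct computation shows that the cost of ``wait'' minus the cost of ``transmit'' at $(\delta,r)$ equals
\begin{equation*}
\gamma_1(r)\bigl[V(\delta+1,0)-V(\delta+1,r+1)\bigr]+p(r)(\alpha-\mu)\bigl[V(\delta+1,0)-V(0,0)\bigr]-\lambda .
\end{equation*}
Denote the sum of the first two terms by $\Psi(\delta,r)$, so that ``transmit'' is (weakly) optimal at $(\delta,r)$ if and only if $\Psi(\delta,r)\geq\lambda$.

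For the case $\mu\geq\alpha$, I would show that ``wait'' attains the minimum in~\eqref{eq:bellman_def} at every state. By Lemma~\ref{lem:value_delta}, $V(\delta+1,0)-V(0,0)\geq 0$, and since $\alpha-\mu\leq 0$ and $p(r)>0$ the second term of $\Psi$ is nonpositive; by Lemma~\ref{lem:value_r}, $V$ is non-decreasing in $r$, so $V(\delta+1,0)-V(\delta+1,r+1)\leq 0$, and since $\gamma_1(r)\geq 0$ the first term is nonpositive as well. Hence $\Psi(\delta,r)\leq 0\leq\lambda$ for all $\delta,r$, i.e.\ ``transmit'' is never strictly better than ``wait'', so the always-wait policy is optimal.

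For the case $\mu<\alpha$, I would reduce the threshold claim to monotonicity of the transmission advantage: if $\Psi(\cdot,r)$ is non-decreasing in $\delta$, then $\{\delta\geq 1:\Psi(\delta,r)\geq\lambda\}$ is an up-set $\{\delta\geq\delta^{\star}(r)\}$ for some $\delta^{\star}(r)\in\{1,2,\dots\}\cup\{\infty\}$, which is precisely a threshold policy (with $\delta^{\star}(r)=\infty$ meaning ``never transmit''). The second term of $\Psi$ is non-decreasing in $\delta$ because $\alpha-\mu>0$, $p(r)>0$, and $\delta\mapsto V(\delta+1,0)$ is non-decreasing by Lemma~\ref{lem:value_delta}. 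For the first term, since $\gamma_1(r)\geq 0$ it suffices that $\delta\mapsto V(\delta,0)-V(\delta,r+1)$ be non-decreasing, i.e.\ that the $\delta$-increments of $V$ be at least as large at transmission count $0$ as at $r+1$.

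The main obstacle is this last ``decreasing differences in $r$'' property of $V$: it does not follow from Lemmas~\ref{lem:value_delta} and~\ref{lem:value_r} alone (monotone functions can have shrinking gaps), and it is exactly the difficulty, flagged in the introduction, caused by the coupling of $\delta$ and $r$ in the ``transmit'' transitions. I would establish it by value iteration: starting from $V_0\equiv 0$ and iterating the Bellman operator in~\eqref{eq:bellman_def}, I would run a joint induction showing that every $V_n$ is (i) non-decreasing in $\delta$, (ii) non-increasing in $r$ when $\mu<\alpha$, and (iii) satisfies $V_n(\delta+1,0)-V_n(\delta+1,r)\geq V_n(\delta,0)-V_n(\delta,r)$ for all $\delta,r$. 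The inductive step uses that the ``wait'' branch of the Bellman operator is independent of $r$ (so it preserves (iii) trivially on that branch), while on the ``transmit'' branch it is a case analysis that invokes (i)--(ii) for $V_n$ together with the monotonicity of $p(\cdot)$; properties (i)--(ii) recover Lemmas~\ref{lem:value_delta}--\ref{lem:value_r}. Passing to the limit $V_n\to V$, justified using~\eqref{eq:bound_cond} and the unichain property of Lemma~\ref{lem:unichain}, transfers (i)--(iii) to $V$, after which $\Psi(\cdot,r)$ is non-decreasing and the threshold structure follows.
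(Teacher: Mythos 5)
Your setup is correct and is, up to sign, identical to the paper's: your quantity $\Psi(\delta,r)-\lambda$ is exactly $-\Delta V(S)$ from~\eqref{eq:delta} (using $V(0,0)=0$), your $\delta=0$ observation is the paper's Remark~\ref{rem:opt_action_0}, and your treatment of the case $\mu\geq\alpha$ is complete and matches the paper's argument via~\eqref{eq:auxil_partial_3}--\eqref{eq:auxil_partial_4}. For $\mu<\alpha$ you also reduce the claim to the same inequality the paper needs, namely that the $\delta$-increments of $V(\cdot,r{+}1)$ are dominated by those of $V(\cdot,0)$ (equivalently, $\partial\Delta V/\partial\delta<0$ in~\eqref{eq:partial_delta}). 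The divergence is only in how that inequality is justified: the paper dispatches it with Lemma~\ref{lem:partial}, a general claim that a function increasing in $x$ and non-increasing in $y$ has $x$-increments that are non-increasing in $y$, whereas you argue (correctly) that such a pointwise statement does not follow from the two monotonicity lemmas alone and propose instead a joint value-iteration induction carrying a decreasing-differences invariant $V_n(\delta{+}1,0)-V_n(\delta{+}1,r)\geq V_n(\delta,0)-V_n(\delta,r)$.

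The gap is that this invariant --- which is the entire substance of the $\mu<\alpha$ case --- is asserted but never verified. Preservation of property (iii) under the Bellman operator is precisely where the coupling of $\delta$ and $r$ in the ``transmit'' transition~\eqref{eq:transition_y1} bites: the transmit branch of $TV_n$ at $(\delta,r)$ involves $V_n(\delta{+}1,r{+}1)$ with weights $\gamma_1(r),\gamma_2(r)$ that themselves depend on $r$ through $p(r)$, and the min over actions may select different branches at the four states entering the difference-of-differences. ``A case analysis that invokes (i)--(ii) and the monotonicity of $p(\cdot)$'' is a placeholder for the hard step, not a proof; until it is carried out there is no argument that $\Psi(\cdot,r)$ is non-decreasing in $\delta$, hence no threshold structure. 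You should either execute that induction in full or follow the paper's route, in which case you must engage with Lemma~\ref{lem:partial}: as a pointwise statement about arbitrary monotone bivariate functions it is false (one can have $f$ increasing in $x$, non-increasing in $y$, with larger $x$-increments at larger $y$ on a bounded region), so its use must be justified from the specific recursive structure of $V$ --- e.g., via the expansion of $V(\delta,r)$ along runs of transmissions and the row-sum monotonicity of the matrix $P$ established in the proof of Lemma~\ref{lem:value_r} --- rather than from monotonicity alone. Your diagnosis of the difficulty is sharper than the paper's own treatment, but the proof as proposed does not yet close it.
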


Proposition~\ref{prop:threshold} reveals that when the probability $\alpha$ is too small, communication is inadequate in reducing the AoII. This is a natural consequence since a transmission is useful only when the most likely state at the time of reception is the one being transmitted. Notice that a more sophisticated estimator at the receiver would leverage the received state to estimate a different state as the most probable. However, this would require the knowledge of the source dynamics at the side of the receiver, which is a rather strict assumption.

As a corollary of Proposition~\ref{prop:threshold}, the average AoII for $\mu\geq\alpha$ can be found by solving for $g$ in~\eqref{eq:bellman_def}, which is easy to do since the optimal policy is $y=0$ for all steps.

\begin{corollary}
If $\mu\geq\alpha$, the average AoII under the optimal (waiting) policy is given by
\begin{equation}\label{eq:waiting_AoII}
    g_{wait} = \frac{\mu\left(f(0)+(1-\alpha)\sum_{i=1}^\infty (1-\mu)^i f(i)\right)}{(1-\alpha)(1-\mu)+\mu}\,.
\end{equation}
\end{corollary}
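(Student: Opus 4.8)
The plan is to deduce the corollary directly from Proposition~\ref{prop:threshold} and a short Markov-chain computation. For $\mu\geq\alpha$ that proposition already establishes that the Lagrange-optimal policy is $y_t\equiv 0$ for every $\lambda\geq 0$, in particular for $\lambda=0$; so the ``always wait'' policy minimizes the unconstrained average AoII, and, since it moreover uses no transmissions, it trivially meets the rate constraint, $C_\pi(S_0)=0\leq R$, and therefore also solves the CMDP~\eqref{eq:opt_problem}. Hence the minimum average AoII equals the long-run average cost of the ``always wait'' policy, and everything reduces to evaluating that average.

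First I would observe that under $y_t\equiv 0$ the transmission count satisfies $r_t=0$ for all $t\geq 1$ by~\eqref{eq:transition_y0}, so the state collapses to the scalar AoII process $\{\delta_t\}$. By the ``wait'' transitions in~\eqref{eq:transition_y0}, this is a Markov chain on $\mathbbm{N}$ that, from $\delta=0$, stays at $0$ with probability $\alpha$ and jumps to $1$ with probability $1-\alpha$, and, from any $\delta\geq 1$, moves to $\delta+1$ with probability $1-\mu$ and drops back to $0$ with probability $\mu$. The chain is irreducible and positive recurrent, since from every state it reaches $0$ in one step with probability at least $\mu$ (in line with Lemma~\ref{lem:unichain}); I would read its unique stationary distribution $\{\pi_\delta\}_{\delta\geq 0}$ off from the balance equations, observing that it has a geometric tail in $\delta$.

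Next I would compute $g_{wait}=\sum_{\delta\geq 0}\pi_\delta f(\delta)$: isolating the $f(0)$ term and summing the geometric series over $\delta\geq 1$ produces the closed form~\eqref{eq:waiting_AoII} after elementary algebra. The convergence of this series, as well as the identification of the limiting time-average in~\eqref{eq:opt_problem} with the stationary expectation, follow from positive recurrence together with the observation that $\gamma_1(0)+\gamma_2(0)\geq 1-\mu$ whenever $\mu\geq\alpha$, so that Assumption~\eqref{eq:bound_cond} gives $\sum_{l\geq 0}f(l+1)(1-\mu)^{l}<\infty$, which dominates the series in question.

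The computation itself is routine; the only delicate point, shared with the rest of the paper, is that the state space is countably infinite, so the existence of the stationary distribution and the validity of the average-cost identity should be argued rather than assumed---this is precisely what the positive-recurrence remark and the geometric decay of $\{\pi_\delta\}$ (relative to the growth of $f$) supply. An equivalent route that bypasses the stationary distribution is to substitute $y_t\equiv 0$ into the Bellman equation~\eqref{eq:bellman_def}, solve the resulting first-order linear recursion for the value function $V(\delta,0)$, and extract $g=g_{wait}$ from the requirement that $V(\delta,0)$ grow sub-geometrically in $\delta$; this reproduces~\eqref{eq:waiting_AoII}.
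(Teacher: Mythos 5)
Your overall route is sound and is essentially the paper's own: the paper's one-line justification is to set $y\equiv 0$ in the Bellman equations and solve for $g$, and your stationary-distribution computation on the collapsed scalar chain is the equivalent ergodic-average version of that (you even list the Bellman route as an alternative). The reduction to the chain on $\delta$ via \eqref{eq:transition_y0}, the positive-recurrence remarks, and the observation that $\gamma_1(0)+\gamma_2(0)\geq 1-\mu$ when $\mu\geq\alpha$ so that \eqref{eq:bound_cond} dominates the relevant series, are all correct.

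The gap is precisely the step you declare ``routine'' and do not carry out. The stationary distribution of your chain is $q_0=\mu/(1-\alpha+\mu)$ and $q_\delta=(1-\alpha)(1-\mu)^{\delta-1}q_0$ for $\delta\geq 1$, so the ergodic average is
\begin{equation*}
\sum_{\delta\geq 0}q_\delta f(\delta)\;=\;\frac{\mu\left(f(0)+(1-\alpha)\sum_{i=1}^{\infty}(1-\mu)^{i-1}f(i)\right)}{1-\alpha+\mu}\,,
\end{equation*}
and the same expression falls out of \eqref{eq:bellman_00_simple}--\eqref{eq:bellman_dr_simple} with $y\equiv 0$. This is \emph{not} the displayed target \eqref{eq:waiting_AoII}: the exponent of the geometric weight and the denominator both differ, and the two expressions cannot be reconciled algebraically (matching the coefficients of $f(0)$ forces $\mu+1-\alpha=(1-\alpha)(1-\mu)+\mu$, i.e.\ $\alpha=1$). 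A concrete check: for $f(\delta)=\delta$, $\alpha=0.2$, $\mu=0.8$, the chain's mean AoII is $0.625$, while \eqref{eq:waiting_AoII} evaluates to $\approx 0.208$; the latter value is also incompatible with Lemma~\ref{lem:value_delta}, since substituting it into the $y=0$ Bellman recursion makes $V(\delta,0)$ eventually negative and geometrically divergent. So your argument, once the algebra is actually performed, proves the corrected identity above rather than the statement as printed; as written, the assertion that ``elementary algebra produces the closed form \eqref{eq:waiting_AoII}'' is the step that fails, and either the algebra must be exhibited (revealing the discrepancy) or the target formula must be amended.
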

%\begin{proof}
%    The detailed derivation can be found in Appendix~\ref{apdx:waiting_AoII}.
%\end{proof}

On the other hand, for $\mu<\alpha$, Proposition~\ref{prop:threshold} states that for a fixed $r$ there exist a corresponding threshold $n^*_{r,\lambda}$, such that for all states $S=(\delta,r)$ with $\delta\geq n^*_{r,\lambda}\,$ the optimal policy is $y=1$, whereas if $\delta<n^*_{r,\lambda}$ the optimal policy is $y=0$. The next proposition implies that it suffices to define only the threshold for $r=0$, $n^*_{0,\lambda}$.

\begin{proposition}~\label{prop:threshold_r}
Let $n^*_{r,\lambda}$ denote the optimal threshold when the transmission count equals $r$. The sequence $\{n^*_{r,\lambda}\}$ is non-increasing w.r.t. $r$.
\end{proposition}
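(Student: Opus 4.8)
The plan is to compare the Bellman optimality conditions at transmission counts $r$ and $r+1$, and show that whenever transmitting is optimal at state $(\delta, r)$, it is also optimal at $(\delta, r+1)$; this monotonicity of the "transmit region" in $r$ is exactly the claim that $\{n^*_{r,\lambda}\}$ is non-increasing. Since we are in the regime $\mu < \alpha$ (otherwise, by Proposition~\ref{prop:threshold}, the policy never transmits and the thresholds are all $+\infty$, so the statement is trivial), we may assume $\mu < \alpha$ throughout. Define the advantage of transmitting over waiting at state $(\delta,r)$ as
\begin{equation}
    D(\delta,r) \triangleq \lambda + \sum_{S'} P(S'\mid (\delta,r),1)V(S') - \sum_{S'} P(S'\mid (\delta,r),0)V(S')\,.
\end{equation}
Transmitting is optimal iff $D(\delta,r)\le 0$. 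Using the transition probabilities~\eqref{eq:transition_y1} and~\eqref{eq:transition_y0} for $\delta>0$, and writing $p_r \triangleq p(r)$, a direct expansion gives
\begin{equation}\label{eq:advantage_expansion}
    D(\delta,r) = \lambda + \gamma_1(r)V(\delta{+}1,r{+}1) + \gamma_2(r)V(\delta{+}1,0) + (1{-}\gamma_1(r){-}\gamma_2(r))V(0,0) - (1{-}\mu)V(\delta{+}1,0) - \mu V(0,0)\,.
\end{equation}
Since $\gamma_1(r)+\gamma_2(r) = 1 - \mu + (\alpha-\mu)p_r$ and $\gamma_1(r) = \alpha(1-p_r)$, collecting terms yields $D(\delta,r) = \lambda + \gamma_1(r)\bigl(V(\delta{+}1,r{+}1) - V(\delta{+}1,0)\bigr) - (\alpha-\mu)p_r\bigl(V(0,0) - V(\delta{+}1,0)\bigr)$. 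By Lemma~\ref{lem:value_delta} the last bracket is $\le 0$ (so that whole term is $\ge 0$ since $\alpha>\mu$), and by Lemma~\ref{lem:value_r} (case $\mu<\alpha$: $V$ non-increasing in $r$) the first bracket is $\le 0$.

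The heart of the argument is then to show $D(\delta,r+1)\le D(\delta,r)$ whenever $D(\delta,r)\le 0$, or more robustly, that the sign pattern of $D$ in $\delta$ can only shift leftward as $r$ increases. First I would establish a refined monotonicity of the value function: not merely that $V(\delta,r)$ is non-increasing in $r$, but that the "gap to $r=0$", namely $V(\delta,0) - V(\delta,r)$, interacts with the threshold structure in a controlled way — in particular that it is itself monotone in $\delta$ in a way compatible with the sign change of $D$. The cleanest route is probably an inductive / value-iteration argument: define $V_k$ by the $k$-step value iteration started from $V_0\equiv 0$, show by induction on $k$ that the optimal action at $(\delta,r)$ in the $k$-step problem has a threshold $n^{(k)}_{r,\lambda}$ non-increasing in $r$, and pass to the limit using the convergence of value iteration guaranteed by Lemma~\ref{lem:unichain} and~\cite[Thm. 6.5.2]{bib:krishnamurthy}. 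In the inductive step one uses that $p(r)$ is non-decreasing, hence $\gamma_1(r) = \alpha(1-p(r))$ is non-increasing in $r$ and $p_r$ is non-decreasing; both effects push $D(\delta,r)$ downward as $r$ grows (recalling both brackets above are $\le 0$ and $\alpha - \mu > 0$), which shrinks the threshold.

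The main obstacle I anticipate is precisely that $D(\delta,r)$ depends on $r$ through three channels simultaneously — the coefficient $\gamma_1(r)$, the probability $p_r$ multiplying the (negative) $\delta$-gap, and the value-function term $V(\delta+1,r+1)$ whose second argument also moves with $r$ — so the naive term-by-term comparison does not immediately close: one needs $V(\delta+1,r+1) - V(\delta+1,0)$ at count $r$ versus $V(\delta+1,r+2) - V(\delta+1,0)$ at count $r+1$, and Lemma~\ref{lem:value_r} alone only gives that each is $\le 0$, not their ordering. I expect to resolve this by carrying a stronger induction hypothesis through value iteration — simultaneously that $V_k(\delta,r)$ is non-increasing in $r$ \emph{and} that the per-step advantage $D_k(\delta,r)$ is non-increasing in $r$ — so that the threshold monotonicity and the value monotonicity reinforce each other at each iterate, rather than trying to derive one from the other in isolation. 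This coupling of the two monotonicities in the induction is, I believe, the technical crux, and mirrors the "deeper examination of the underlying state transitions" the introduction flags as the novel difficulty relative to single-variable threshold MDPs.
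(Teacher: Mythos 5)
You reduce the proposition to showing that the transmit-minus-wait advantage $D(\delta,r)$ (the paper's $\Delta V$ in \eqref{eq:delta}) is non-increasing in $r$, which is exactly the right target, and you correctly isolate the crux: Lemma~\ref{lem:value_r} only gives $V(\delta{+}1,r{+}1)-V(\delta{+}1,0)\le 0$ for each $r$ separately, not how these gaps are ordered across $r$, while the coefficient $\gamma_1(r)$ shrinks with $r$, so that first term actually moves \emph{toward} zero as $r$ grows rather than downward. There are also sign slips in your algebra: $\gamma_1(r)+\gamma_2(r)=1-\mu-(\alpha-\mu)p_r$, and collecting terms correctly gives $D=\lambda+\gamma_1(r)\bigl(V(\delta{+}1,r{+}1)-V(\delta{+}1,0)\bigr)-(\alpha-\mu)p_r\bigl(V(\delta{+}1,0)-V(0,0)\bigr)$, so the last term is $\le 0$, not $\ge 0$; it is the only channel that unambiguously pushes $D$ down as $p_r$ grows. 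Your claim that ``both effects push $D(\delta,r)$ downward'' is therefore not correct as stated, and the proof stops precisely where the work is: the coupled value-iteration induction you gesture at in the final paragraph is never carried out, so the argument is incomplete at its crux.

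The paper closes this gap without any new induction. It observes that $V^0(\delta,r)$ does not depend on $r$ at all, and that the transmit $Q$-value $V^1(\delta,r)=f(\delta)+\lambda-g+\gamma_1(r)V(\delta{+}1,r{+}1)+\gamma_2(r)V(\delta{+}1,0)$ is non-increasing in $r$ by the very machinery already built for Lemma~\ref{lem:value_r}: rather than comparing the three $r$-dependent channels term by term, that proof unrolls the continuation into sums weighted by the row sums $\sum_i P^l_{r,i}$ of powers of the substochastic matrix $P$ and shows those row sums are non-increasing in $r$, which controls the coefficient, the decoding probability, and the shifted value argument simultaneously. If you want to complete your route, the cleanest fix is the same observation: rerun the trajectory-coupling argument of Lemma~\ref{lem:value_r} with the first action pinned to $y=1$, which yields $V^1(\delta,r^+)\le V^1(\delta,r^-)$ for $r^+\ge r^-$ directly, and hence the monotonicity of $D$ in $r$.
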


We elaborate on the consequences of Proposition~\ref{prop:threshold_r}. Suppose that the system has reached the state $S_t = (\delta_t,r_t)$, $\delta_t\!>\!0$, $r_t\!>\!0$. Since $r_t$ is positive, it follows that the previous state was $S_{t-1} = (\delta_t{-}1,r_t{-}1)$ and the optimal action was $y_{t-1}=1$. Therefore, it holds that $\delta_t\!-\!1\!\geq\! n^*_{r_t-1,\lambda}\,$, which implies that $\delta_t\!>\! n^*_{r_t,\lambda}$ due to Proposition~\ref{prop:threshold_r}. It follows that the optimal action for the state $S_t$ is also $y_t\!=\!1$. By induction, we infer that it suffices to find the threshold $n^*_{0,\lambda}$. After the instantaneous AoII reaches $n^*_{0,\lambda}$, the optimal action is to continuously transmit until the AoII becomes zero. %Reversely, the optimal action cannot be to transmit unless the instantaneous AoII is greater or equal to $n^*_{0,\lambda}$.}

Exploiting the previous results, we can derive the following theorem, which describes the solution to the Lagrangian MDP problem~\eqref{eq:lagrange_problem}.

\begin{theorem}\label{thm:optimal_policy}
The optimal threshold $n^*_{0,\lambda}$ is equal to

\begin{equation}\label{eq:optimal_thresh}
\begin{aligned}
    n^*_{0,\lambda} = \min\{&n_0\in\mathbbm{N^*} :\\
    &(1\!-\!\mu) V(n_0{+}1,0)\!-\!V(n_0,0)\!+\!f(n_0)\!-\!g_{n_0}\!>\!0\},
\end{aligned}
\end{equation}
where $g_{n_0}$ is the average cost achieved by the threshold $n_0$. The values of $V(n_0,0)$, $V(n_0{+}1,0)$ and $g_{n_0}$ are computed via the expressions in Table~\ref{tab:opt_policy_analytic}.

\begin{table*}
\normalsize
\hrule
~

    \begin{equation*}
        V(\delta,0) = \sum_{l=0}^{\infty}\left[\left(f(\delta{+}l)+ \lambda -g_{n_0}\right)\sum_{i=0}^{\infty}P_{0,i}^l\right],\quad \text{for $\delta\geq n_0$}\,.
    \end{equation*}
    ~

    \begin{equation*}
        g_{n_0} = \frac{\frac{f(0)}{1-\alpha}+\sum\limits_{i=0}^{n_0-2} (1-\mu)^i f(i{+}1)+(1-\mu)^{n_0-1}\sum\limits_{l=0}^{\infty}\left[\left(f(n_0{+}l)+ \lambda \right)\sum\limits_{i=0}^{\infty}P_{0,i}^l\right]}{\frac{1}{1-\alpha}+\sum\limits_{i=0}^{n_0-2}(1-\mu)^i+(1-\mu)^{n_0-1}\sum\limits_{l=0}^{\infty}\sum\limits_{i=0}^{\infty}P_{0,i}^l},
    \end{equation*}
    
    where
    \begin{equation*}
            P\triangleq
    \bordermatrix{%
    &\scriptstyle\uline{0}&\scriptstyle\uline{1}&\scriptstyle\uline{2}&\scriptstyle\uline{3}&\scriptstyle\uline{4}&\dots&\scriptstyle\uline{j}\cr%&\scriptstyle\uline{j+1}&\scriptstyle\uline{j+2}&\dots\ \cr
    \hfill\scriptstyle\uline{0}\hfill&\gamma_2(0) &\gamma_1(0) &0 &0 &0 &\dots &0\cr%&0 &0 &\dots\cr
    \hfill\scriptstyle\uline{1}\hfill&\gamma_2(1) &0 &\gamma_1(1) &0 &0 &\dots &0\cr%&0 &0 &\dots\cr
    \hfill\scriptstyle\uline{2}\hfill&\gamma_2(2) &0 &0 &\gamma_1(2) &0 &\dots &0\cr%&0 &0 &\dots\cr
    \hfill\scriptstyle\vdots\hfill&\vdots &  &  &  &\ddots & &\vdots\cr% & & &\dots\cr
    \hfill\scriptstyle\vdots\hfill&\vdots &  &  &   & &\ddots &\vdots\cr% & & &\dots\cr
    \hfill\scriptstyle\uline{j-1}\hfill&\gamma_2(j-1) &0 &0 &0 &0 &\dots &\gamma_1(j-1)\cr% & & &\dots\cr
    \hfill\scriptstyle\uline{j}\hfill&\gamma_2(j) &0 &0 &0 &0 &\dots &0\cr% &\gamma_1(j) &0 &\dots\cr
    %\hfill\scriptstyle\vdots\hfill&\vdots &  & & &\vdots  & & &\ddots &\vdots\cr
    },\quad j\rightarrow\infty\,.
    \end{equation*}

\hrule
\caption{The analytic expressions of $V(\delta,0)$ for $\delta\geq n_0$ and $g_{n_0}$.}
\label{tab:opt_policy_analytic}
\end{table*}

\end{theorem}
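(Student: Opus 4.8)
The plan is to exploit the structural results already established — in particular Proposition~\ref{prop:threshold} (threshold structure for $\mu<\alpha$) and Proposition~\ref{prop:threshold_r} (it suffices to find $n^*_{0,\lambda}$, after which transmission continues until the AoII resets) — to reduce the search for the optimal policy to a one-dimensional search over the single threshold $n_0$. Under a fixed threshold policy with threshold $n_0$ at $r=0$, the state process becomes a well-structured Markov chain: starting from $\delta=0$, the AoII climbs deterministically (in the sense of the age variable) by one each slot with probability $1-\mu$ while $1\le\delta<n_0$, with a reset to $0$ (probability $\mu$, or $\alpha$ out of the $\delta=0$ state). Once $\delta$ reaches $n_0$, continuous transmission begins and the AoII-and-transmission-count pair evolves according to the sub-stochastic matrix $P$ displayed in Table~\ref{tab:opt_policy_analytic}, whose $(i,j)$ entries encode the probabilities $\gamma_1(i),\gamma_2(i)$ of either incrementing the transmission count or resetting it to $0$ (still with $\delta$ incremented) versus the absorbing "success" event that sends $\delta\to 0$. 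The quantity $\sum_{i=0}^\infty P_{0,i}^l$ is exactly the probability that, after $l$ slots of continuous transmission starting from $(n_0,0)$, the AoII has not yet been reset, which is why it weights the cost term $f(n_0+l)+\lambda$.

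The key steps, in order, are: (i) write the Bellman equations restricted to the recurrent class under the fixed-threshold-$n_0$ policy and solve them explicitly — for $\delta\ge n_0$ the policy is "transmit", so $V(\delta,0)$ satisfies a linear recursion whose solution is the infinite sum in Table~\ref{tab:opt_policy_analytic}, obtained by unrolling the recursion and collecting the per-slot excess cost $f(\delta+l)+\lambda-g_{n_0}$ against the survival probabilities $\sum_i P_{0,i}^l$; (ii) for $1\le\delta<n_0$ the policy is "wait", giving the simple recursion $V(\delta,0) = f(\delta) - g_{n_0} + (1-\mu)V(\delta+1,0)$ (the reset branch contributes $\mu V(0,0)$, which one can normalize away or carry through), and at $\delta=0$ the relation $V(0,0) = f(0) - g_{n_0} + (1-\alpha)V(1,0) + \alpha V(0,0)$; (iii) use the renewal-reward / balance identity that $g_{n_0}$ equals the expected per-cycle cost divided by the expected cycle length, where a cycle is the time between successive visits to $\delta=0$ — this yields the displayed closed form for $g_{n_0}$, with the numerator collecting $f(0)/(1-\alpha)$ (the contribution of the $\delta=0$ sojourn), the geometric sum $\sum_{i=0}^{n_0-2}(1-\mu)^i f(i+1)$ over the deterministic-climb phase, and the $(1-\mu)^{n_0-1}\sum_l[\cdots]$ term for the transmitting phase, and the denominator collecting the matching expected sojourn lengths; (iv) finally, characterize the \emph{optimal} $n_0$ by a first-difference (exchange) argument on $g_{n_0}$: since the family of admissible Lagrange-optimal policies is the threshold family, $n^*_{0,\lambda}$ is the smallest $n_0$ at which it is no longer beneficial to postpone the first transmission, which after substituting the Bellman equation for state $(n_0,0)$ is precisely the condition $(1-\mu)V(n_0+1,0) - V(n_0,0) + f(n_0) - g_{n_0} > 0$ in~\eqref{eq:optimal_thresh}.

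Concretely, step (iv) is carried out by observing that at the optimal threshold the "wait" and "transmit" Q-values at state $(n_0-1,0)$ must order correctly while at $(n_0,0)$ they flip; writing out $Q(\delta,0;\text{wait}) - Q(\delta,0;\text{transmit})$ using~\eqref{eq:transition_y0}--\eqref{eq:transition_y1} and Lemma~\ref{lem:value_r} (which guarantees $V(\cdot,r)$ is non-increasing in $r$ when $\mu<\alpha$, so the transmit branch's $V(\delta+1,r+1)$ term is dominated by $V(\delta+1,0)$ up to the $\lambda$ penalty) reduces the sign of this difference to the single scalar inequality appearing in the min. Monotonicity of this difference in $\delta$ — inherited from Lemma~\ref{lem:value_delta} and the convexity-like behavior of $f$ together with the threshold structure — ensures the set in~\eqref{eq:optimal_thresh} is an up-set, so its minimum is well-defined and equals $n^*_{0,\lambda}$.

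The main obstacle I anticipate is step (i)–(iii) done rigorously on an \emph{infinite} state space: one must justify that the Bellman equations have a solution of the claimed form (existence/uniqueness of the average-cost solution up to a constant), that the infinite sums converge — this is exactly where the standing assumption~\eqref{eq:bound_cond} on $\sum_l f(l+1)(\gamma_1(0)+\gamma_2(0))^l$ enters, since $\sum_i P_{0,i}^l$ decays at rate governed by $\gamma_1(0)+\gamma_2(0)$ — and that the renewal-reward formula for $g_{n_0}$ is valid (finite expected cycle length, which follows from $\mu>0$ and $p(r)>0$). A secondary subtlety is bookkeeping the reset term $\mu V(0,0)$ through the "wait" phase recursion: it is cleanest to pin the gauge by exploiting that $V$ is only determined up to an additive constant and verifying the renewal identity independently rather than back-substituting, which is presumably why the theorem presents $g_{n_0}$ via the explicit renewal ratio rather than via the value function.
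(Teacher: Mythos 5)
Your proposal is correct and follows essentially the same route as the paper: derive the threshold condition by combining the ``transmit is better than wait'' inequality with the Bellman recursion at $r=0$ (which is exactly how~\eqref{eq:optimal_thresh} arises), and obtain $V(\delta,0)$ for $\delta\geq n_0$ by unrolling the transmit-phase recursion against the sub-stochastic matrix $P$. The only cosmetic difference is in computing $g_{n_0}$: you invoke renewal--reward over cycles between visits to $\delta=0$, whereas the paper pins the gauge $V(0,0)=0$, uses $g_{n_0}=f(0)+(1-\alpha)V(1,0)$, and back-substitutes $V(1,0)$ from the wait-phase recursion to solve for $g_{n_0}$ --- both yield the identical ratio in Table~\ref{tab:opt_policy_analytic}.
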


\section{Optimal Policy For The Constrained Problem}\label{sec:CMDP}

Up to this point, we have studied the Lagrangian MDP~\eqref{eq:lagrange_problem}, with our final objective being the solution of the CMDP~\eqref{eq:opt_problem}. To this end, we need to track the transmission rate achieved by the solution of the Lagrangian MDP.

Let $\pi_\lambda$ denote the policy that solves the Lagrangian MDP with parameter $\lambda$ and let $C_{\pi_\lambda}$ denote the transmission rate that is achieved from this policy.

\begin{proposition}\label{prop:rate}
The achieved transmission rate of the threshold-based policy with threshold $n_0$ is equal to

\begin{equation}\label{eq:achiev_rate_result}
    C_{\pi_\lambda} = q_{0,0} (1-\alpha)(1-\mu)^{n_0-1} \sum_{0\leq h}\, \sum_{0\leq r\leq h} m(h,r),
\end{equation}
where the function $m(\cdot,\cdot)$ is recursively defined by
\begin{align}
    &m(h,r) = \begin{cases} m(h{-}r,0) \prod_{j=0}^{r-1}\gamma_1(j)& \text{if\, $0\leq h,\ 0\leq r\leq h$}\,,\nonumber \\
                             0& \text{if\, $0\leq h,\ h < r$}\,,
                \end{cases}\\
    &m(h,0) = \sum_{k=0}^{h-1}\left[\gamma_2(k)m(h{-}k{-}1,0) \prod_{j=0}^{k-1}\gamma_1(j)\right], \quad 1 \leq h\,,\nonumber\\
    &m(0,0) = 1\,,\\
            &\text{with the convention that $\prod\limits_{k=\phi}^{\chi}(\cdot)=1$ if $\phi>\chi$\ ,}\nonumber
\end{align}
and $q_{0,0}$ is the stationary probability of the process being at state $(0,0)$, which is equal to

\begin{equation}\label{eq:q00}
\begin{aligned}
    {q_{0,0} = \Biggl(1+(1-\alpha)\Biggl(}& \sum_{k=1}^{n_0-1}(1-\mu)^{k-1}\\
                &+ (1-\mu)^{n_0-1}\sum_{h=0}^\infty \sum_{r=0}^{h} m(h,r) \Biggr)\Biggr)^{-1}.
\end{aligned}
\end{equation}
\end{proposition}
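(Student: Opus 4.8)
The plan is to obtain $C_{\pi_\lambda}$ from a regenerative (renewal--reward) decomposition of the Markov chain induced by the threshold-$n_0$ policy, with the state $(0,0)$ playing the role of the regeneration atom. Throughout I work in the regime $\mu<\alpha$ (otherwise, by Proposition~\ref{prop:threshold}, $\pi_\lambda$ never transmits and $C_{\pi_\lambda}=0$), and I read the ``threshold-$n_0$ policy'' as the stationary policy that transmits exactly on the states $(\delta,r)$ with $\delta\geq n_0$, where $n_0\geq 1$.

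\textbf{Step 1 (structure of the induced chain; reduction to visit counts).} First I would identify the recurrent class. The ``wait'' states $(0,0),(1,0),\dots,(n_0{-}1,0)$ are reached along the single path $(0,0)\to(1,0)\to\dots$ via \eqref{eq:transition_y0}. The ``transmit'' states, obtained by following \eqref{eq:transition_y1} out of $(n_0,0)$, are exactly the pairs $(n_0{+}h,r)$ with $h\geq 0$ and $0\leq r\leq h$: a nonzero count is produced only by a run of consecutive $\gamma_1$-steps, and any $\gamma_2$-step or any exit resets it, so $r\leq\delta-n_0$ always — in particular every reachable transmit state indeed carries action $y=1$. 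Within the transmit region $\delta$ strictly increases at each step until the chain jumps to $(0,0)$, so every transmit state is visited at most once per cycle; and since the per-step probability of remaining in the transmit region is $\gamma_1(r)+\gamma_2(r)\leq\gamma_1(0)+\gamma_2(0)<1$ (the strict inequality following from $p(0)>0$ and $\alpha>\mu$), the expected cycle length is finite, i.e. the chain is positive recurrent. By Lemma~\ref{lem:unichain} and the ergodic theorem, $C_{\pi_\lambda}=\sum_{s:\,\pi(s)=1}q_s$, and writing $q_{0,0}=1/E_{0,0}[\tau]$ with $\tau$ the return time to $(0,0)$, the standard regenerative identity gives $q_s=q_{0,0}\cdot E_{0,0}[\text{number of visits to }s\text{ before }\tau]$ for every state $s$.

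\textbf{Step 2 (the two elementary factors).} The first factor is immediate: the only cycle path from $(0,0)$ that reaches $(n_0,0)$ is $(0,0)\to(1,0)\to\dots\to(n_0,0)$, of probability $(1-\alpha)(1-\mu)^{n_0-1}$ by \eqref{eq:transition_y0}, so $E_{0,0}[\text{visits to }(n_0,0)\text{ before }\tau]=(1-\alpha)(1-\mu)^{n_0-1}$ and likewise $E_{0,0}[\text{visits to }(\delta,0)\text{ before }\tau]=(1-\alpha)(1-\mu)^{\delta-1}$ for $1\leq\delta\leq n_0{-}1$. The second factor is the content of $m(\cdot,\cdot)$: I would show $m(h,r)=E_{n_0,0}[\text{visits to }(n_0{+}h,r)\text{ before the chain reaches }(0,0)]$ by induction on $h$ (equivalently, by enumerating the finitely many paths). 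Here $m(0,0)=1$ is the single visit to $(n_0,0)$; for $r\geq 1$ the unique way into $(n_0{+}h,r)$ is a block of $r$ consecutive $\gamma_1$-transitions starting from $(n_0{+}h{-}r,0)$, whence $m(h,r)=m(h{-}r,0)\prod_{j=0}^{r-1}\gamma_1(j)$; and for $h\geq 1$, $(n_0{+}h,0)$ is entered only by a $\gamma_2$-transition from some $(n_0{+}h{-}1,k)$ with $0\leq k\leq h{-}1$, whence $m(h,0)=\sum_{k=0}^{h-1}\gamma_2(k)\,m(h{-}k{-}1,0)\prod_{j=0}^{k-1}\gamma_1(j)$ — precisely the stated recursion. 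Applying the strong Markov property at $(n_0,0)$ and composing with the first factor gives $q_{n_0+h,r}=q_{0,0}(1-\alpha)(1-\mu)^{n_0-1}m(h,r)$; summing over all transmit states ($h\geq 0$, $0\leq r\leq h$) yields \eqref{eq:achiev_rate_result}, the double sum converging because $\sum_{h,r}m(h,r)$ is the expected sojourn time in the transmit region, at most $\sum_{k\geq 0}(\gamma_1(0)+\gamma_2(0))^k<\infty$. Finally, imposing $\sum_s q_s=1$, i.e. $q_{0,0}\bigl(1+(1-\alpha)\sum_{k=1}^{n_0-1}(1-\mu)^{k-1}+(1-\alpha)(1-\mu)^{n_0-1}\sum_{h,r}m(h,r)\bigr)=1$, and solving for $q_{0,0}$ gives \eqref{eq:q00}.

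\textbf{Expected main obstacle.} The renewal-reward bookkeeping and the geometric tail bound are routine; the crux is the combinatorial identification of $m(h,r)$ with the expected visit count — specifically, verifying that the transmit region is entered only at $(n_0,0)$, that every excursion inside it is an alternating arrangement of $\gamma_1$-runs separated by a single $\gamma_2$-step, and that no state with $r>\delta-n_0$ is ever reachable — so that the recursion for $m$ neither omits a path nor double-counts one. A secondary care point is establishing positive recurrence (hence the very applicability of the regenerative identity and the convergence of the double sum), which rests on the bound $\gamma_1(0)+\gamma_2(0)<1$ that holds precisely because $\mu<\alpha$.
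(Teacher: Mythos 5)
Your proposal is correct and follows essentially the same route as the paper: both express every stationary probability in terms of $q_{0,0}$ via the chain structure induced by the threshold policy, identify $m(h,r)$ through exactly the recursion $q_{n_0+h,r}=m(h,r)\,q_{n_0,0}$, sum over the transmit states, and invoke the normalization condition to obtain $q_{0,0}$. The only difference is presentational — the paper reads the linear relations among the $q_{\delta,r}$ directly as balance equations of the stationary distribution, whereas you justify the same relations via the regenerative (expected visits per cycle) identity, additionally supplying the positive-recurrence and convergence checks that the paper leaves implicit.
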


\begin{proof}
The proof is given in the Appendix.
\end{proof}

The main result of this section is summarized with the following theorem.
\begin{theorem}\label{thm:cmdp_policy}
There exists an optimal policy $\pi^*$ of the CMDP, which is a randomized mixture of two stationary threshold-based policies of the Lagrangian MDP $\pi_{\lambda^*}$ and $\pi_{\lambda^-}$, that correspond to the thresholds $n^*_{0,\lambda^*}$ and $n^*_{0,\lambda^-}\,$, respectively. In particular,
\begin{equation}\label{eq:lambda_star}
        \lambda^* \triangleq \inf\{\lambda\in\mathbbm{R^+}:C^*_{\pi_{\lambda}}\leq R\}
\end{equation}
and
\begin{equation}
    n^*_{0,\lambda^-} = n^*_{0,\lambda^*} - 1\,.
\end{equation}
The optimal policy is defined as

\begin{equation}\label{eq:optimal_policy_thm}
    \pi^* = \rho \pi_{\lambda^*} + (1-\rho)\pi_{\lambda^-}\,,
\end{equation}
where $\rho\in(0,1]$ denotes the randomized mixture component. The mixture component is chosen such that the randomized policy has an average transmission rate equal to $R$, and it is explicitly defined by
\begin{equation}
    \rho \triangleq \frac{R - C_{\pi_{\lambda^*}}}{C_{\pi_{\lambda^-}} - C_{\pi_{\lambda^*}}}\,,
\end{equation}
where $C_{\pi_{\lambda^-}}$ and $C_{\pi_{\lambda^*}}$ are the average transmission rates when the thresholds are $n^*_{0,\lambda^-}$ and $n^*_{0,\lambda^*}$, respectively.
\end{theorem}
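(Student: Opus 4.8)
The plan is to run the classical Lagrangian/duality argument for constrained average-cost MDPs, with the work concentrated in verifying that the Lagrange-optimal transmission rate is monotone in the multiplier. First I would record what the earlier results already give us: by Lemma~\ref{lem:unichain} the Lagrangian MDP is unichain, and under~\eqref{eq:bound_cond} its optimal average cost $g=g(\lambda)$ is finite for every $\lambda\geq0$, so the Bellman equation~\eqref{eq:bellman_def} is solvable and, by Propositions~\ref{prop:threshold}--\ref{prop:threshold_r} and Theorem~\ref{thm:optimal_policy}, $\pi_\lambda$ is a stationary policy that transmits exactly on the states with $\delta\geq n^*_{0,\lambda}$. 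The target reduces, by weak duality plus complementary slackness, to exhibiting a multiplier $\lambda^*\geq0$ and a stationary policy $\pi^*$ that is (a) optimal for the Lagrangian MDP with parameter $\lambda^*$ and (b) feasible for~\eqref{eq:opt_problem} with $C_{\pi^*}=R$ whenever $\lambda^*>0$: indeed, writing $J_\pi = J(\pi,\lambda^*) - \lambda^*(C_\pi - R)$ from~\eqref{eq:lagrange_problem}, any feasible $\pi$ satisfies $-\lambda^*(C_\pi-R)\geq0$, hence $J_\pi\geq J(\pi,\lambda^*)\geq g(\lambda^*)=J(\pi^*,\lambda^*)=J_{\pi^*}$.

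The core step is the monotonicity of $\lambda\mapsto C_{\pi_\lambda}$. I would argue it in two pieces. From Table~\ref{tab:opt_policy_analytic}, the average cost $g_{n_0}$ of a fixed threshold $n_0$ is affine in $\lambda$, $g_{n_0}(\lambda)=a_{n_0}+b_{n_0}\lambda$, with slope $b_{n_0}$ equal to the transmission rate of that threshold, which by Proposition~\ref{prop:rate} (combining~\eqref{eq:achiev_rate_result} with~\eqref{eq:q00}) is \emph{strictly decreasing} in $n_0$, since a larger threshold inserts more ``wait'' states into every regeneration cycle while the transmitting excursion stays fixed. The Lagrange-optimal cost $g(\lambda)=\min_{n_0}g_{n_0}(\lambda)$ is therefore the lower envelope of these affine pieces, which is concave and piecewise linear, and as $\lambda$ increases the active piece moves to ever smaller slope, i.e.\ $n^*_{0,\lambda}$ is non-decreasing in $\lambda$ (equivalently, the advantage-of-waiting expression in~\eqref{eq:optimal_thresh} is non-decreasing in $\lambda$). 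Composing this with the strict monotonicity of the rate in $n_0$ gives that $C_{\pi_\lambda}$ is non-increasing and piecewise constant, with its largest value attained at $\lambda=0$ and $C_{\pi_\lambda}\to0$ as $\lambda\to\infty$. I also need that $n^*_{0,\lambda}$ decreases by \emph{exactly one} at each breakpoint: this follows because two pieces $g_n,g_m$ with $n<m$ can be consecutive on the lower envelope only if every intermediate $g_k$ ($n<k<m$) lies strictly above their crossing point, and the convex-sequence structure of $g_{n_0}$ in $n_0$ forces $m=n+1$; hence at the breakpoint $\lambda^*$ precisely the two adjacent thresholds $n^*_{0,\lambda^*}$ and $n^*_{0,\lambda^*}-1$ are simultaneously Lagrange-optimal.

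With this in hand, the construction is immediate. Assume the CMDP is feasible and no single threshold already meets the constraint with equality; take $\lambda^*$ as in~\eqref{eq:lambda_star}. By the monotone piecewise-constant behaviour, $C_{\pi_{\lambda^*}}\leq R< C_{\pi_{\lambda^-}}$, where $\pi_{\lambda^-}$ is the threshold-$(n^*_{0,\lambda^*}-1)$ policy, and both $\pi_{\lambda^*}$ and $\pi_{\lambda^-}$ are optimal for the Lagrangian MDP with parameter $\lambda^*$. These two deterministic policies disagree only at the AoII level $\delta=n^*_{0,\lambda^*}-1$, which by Proposition~\ref{prop:threshold_r} is visited only with $r=0$, so they differ at a single state; randomizing the action there with weight $\rho$ on ``wait'' produces a stationary, still-unichain policy $\pi^*=\rho\pi_{\lambda^*}+(1-\rho)\pi_{\lambda^-}$ whose regenerative structure differs from each constituent only at that state. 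Hence its average cost and its average transmission rate are the corresponding convex combinations $\rho(\cdot)_{\lambda^*}+(1-\rho)(\cdot)_{\lambda^-}$, so $\pi^*$ is again Lagrange-optimal for $\lambda^*$, and $\rho=\dfrac{R-C_{\pi_{\lambda^*}}}{C_{\pi_{\lambda^-}}-C_{\pi_{\lambda^*}}}\in(0,1]$ makes $C_{\pi^*}=R$; the duality inequality of the first paragraph then yields optimality for the CMDP. The degenerate situations ($\mu\geq\alpha$ where the optimum is to never transmit, or some threshold already giving rate $\leq R$) are absorbed by taking $\lambda^*=0$, $\rho=1$, $\pi^*=\pi_{\lambda^*}$.

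The main obstacle is not the duality skeleton, which is routine, but two points that genuinely use the structure derived earlier: (i) justifying over the countably infinite state space that $g(\lambda)$ is exactly the lower envelope of the affine pieces $g_{n_0}(\lambda)$ and that the average cost and rate are truly linear under the one-state randomization — this needs~\eqref{eq:bound_cond} and a truncation/limiting argument rather than an off-the-shelf finite-state CMDP theorem; and (ii) the claim that consecutive breakpoints of that envelope correspond to thresholds differing by exactly one, so that the two mixed policies are precisely $n^*_{0,\lambda^*}$ and $n^*_{0,\lambda^*}-1$, which is where the explicit forms of $g_{n_0}$ in Table~\ref{tab:opt_policy_analytic} and of the rate in Proposition~\ref{prop:rate} must be exploited.
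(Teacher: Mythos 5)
Your route is not the paper's: the paper does not rebuild Lagrangian duality by hand, it verifies Assumptions 1--5 of Sennott's framework for constrained average-cost MDPs on countable state spaces (finite level sets of $f$, a policy with finite cost and rate via the always-transmit policy and~\eqref{eq:bound_cond}, mutual accessibility of states, a single positive recurrent class containing $(0,0)$, and Slater feasibility) and then invokes Sennott's Theorem~2.5, Proposition~3.2 and Lemmas~3.9, 3.12, which directly deliver an optimal policy that mixes two deterministic Lagrange-optimal policies differing in at most one state; the adjacency $n^*_{0,\lambda^-}=n^*_{0,\lambda^*}-1$ then comes for free from ``differ in at most one state'' combined with the threshold structure of Proposition~\ref{prop:threshold}. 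In your version that adjacency must be proved, and this is where the genuine gap sits: you assert a ``convex-sequence structure of $g_{n_0}$ in $n_0$'' to force consecutive pieces of the lower envelope $g(\lambda)=\min_{n_0}g_{n_0}(\lambda)$ to carry consecutive indices, but you never establish that $n_0\mapsto g_{n_0}(\lambda)$ is convex, and nothing in Table~\ref{tab:opt_policy_analytic} or Proposition~\ref{prop:rate} hands you that. Without it your envelope argument shows only that the optimal threshold jumps at breakpoints, not that it jumps by exactly one, and the identification of the two policies being mixed collapses. You flagged this yourself as obstacle~(ii); flagging it is not closing it.

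A second gap is the claim that randomizing the action at the single disagreement state makes the average cost and transmission rate exactly the convex combinations $\rho(\cdot)_{\lambda^*}+(1-\rho)(\cdot)_{\lambda^-}$. Under per-state (per-slot) randomization this is false in general: by renewal--reward over excursions from $(0,0)$, the expected cost per cycle and the expected cycle length are each affine in $\rho$, so the long-run averages are ratios of affine functions of $\rho$, not affine functions, and the explicit formula for $\rho$ would not make $C_{\pi^*}$ exactly $R$. The linearity you invoke is exact under the ``randomize once at time zero, then follow one deterministic policy forever'' reading of $\pi^*=\rho\pi_{\lambda^*}+(1-\rho)\pi_{\lambda^-}$, which is the reading under which Sennott's construction (and hence the stated $\rho$) is justified; you should either adopt that interpretation explicitly or replace the affine formula by a continuity/intermediate-value argument for the per-slot version. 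Finally, the countable-state issues you defer in your closing paragraph --- that $g(\lambda)$ really is the lower envelope of the $g_{n_0}(\lambda)$, that $\lambda^*$ in~\eqref{eq:lambda_star} exists and is attained, and that the verification inequality is valid without a finite-state LP --- are precisely what the paper's verification of Assumptions~1--5 exists to supply, so they cannot be dismissed as routine.
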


\begin{proof}
The proof is given in the Appendix.
\end{proof}

Equation~\eqref{eq:optimal_policy_thm} is interpreted such as, at every time slot, the policy $\pi^*$ chooses randomly either $n^*_{0,\lambda^*}$ or $n^*_{0,\lambda^-}$ as a threshold, with probability $\rho$ and $1-\rho$, respectively.

Finally, the following proposition will be exploited for the algorithmic utilization of Theorem~\ref{thm:cmdp_policy}, as it will be analyzed in Section~\ref{sec:algorithm}.

\begin{proposition}\label{prop:rate_monotonic}
The Lagrange-optimal threshold $n^*_{r,\lambda}$ is non-decreasing with $\lambda$, while the transmission rate $C_{\pi_\lambda}$ is non-increasing with $\lambda$.
\end{proposition}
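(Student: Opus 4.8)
## Proof Proposal for Proposition 9

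The plan is to establish the two monotonicity claims separately, leveraging the structural results already proved for the Lagrangian MDP. For the first claim — that $n^*_{r,\lambda}$ is non-decreasing in $\lambda$ — I would argue via the value function and the threshold characterization in Theorem~\ref{thm:optimal_policy}. The key observation is that the Lagrangian stage cost $f(\delta_t) + \lambda y_t$ penalizes the ``transmit'' action more heavily as $\lambda$ grows, so transmitting should become \emph{less} attractive, pushing the threshold up. Concretely, I would write $V_\lambda(\delta,r)$ and $g_\lambda$ to make the $\lambda$-dependence explicit, and show that the ``advantage'' of waiting over transmitting at a state $(\delta,r)$,
\begin{equation*}
    \Phi_\lambda(\delta,r) \triangleq \big[\text{cost of }y{=}0\big] - \big[\text{cost of }y{=}1\big],
\end{equation*}
is non-increasing in $\lambda$ at every fixed state. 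Since the optimal action is to wait precisely when $\Phi_\lambda(\delta,r)\leq 0$ (with threshold structure from Proposition~\ref{prop:threshold}), a non-increasing $\Phi_\lambda$ means the wait-region can only grow with $\lambda$, i.e.\ $n^*_{r,\lambda}$ is non-decreasing. To handle the coupling through $V_\lambda$ and $g_\lambda$, I would either (i) use a coupling/sample-path argument comparing the two Lagrangian MDPs with parameters $\lambda_1 < \lambda_2$ under a common policy, noting the extra cost $(\lambda_2-\lambda_1)\sum y_t$ is monotone in the number of transmissions, or (ii) invoke the standard submodularity-in-$(\lambda, \text{action})$ argument for parametrized MDPs, since the stage cost $f(\delta) + \lambda y$ has non-positive cross-difference in $(\lambda, y)$.

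For the second claim — that $C_{\pi_\lambda}$ is non-increasing in $\lambda$ — I would combine the first claim with Proposition~\ref{prop:rate}, or argue more directly. The direct route: a larger threshold $n_0$ means the system spends more consecutive slots in the ``wait'' phase before any transmission burst begins, so transmissions are less frequent. Formally, from the expression~\eqref{eq:achiev_rate_result} together with~\eqref{eq:q00}, one sees $C_{\pi_\lambda} = q_{0,0}\,(1-\alpha)(1-\mu)^{n_0-1} M$ where $M = \sum_{h}\sum_{r\leq h} m(h,r)$ is independent of $n_0$, while $q_{0,0}^{-1}$ contains the term $(1-\alpha)\sum_{k=1}^{n_0-1}(1-\mu)^{k-1}$ that strictly increases with $n_0$ plus a $(1-\mu)^{n_0-1}$-weighted piece; a short computation shows the ratio $C_{\pi_\lambda}$ as a function of the integer threshold $n_0$ is non-increasing. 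Since $n^*_{0,\lambda}$ (and hence all $n^*_{r,\lambda}$ by Proposition~\ref{prop:threshold_r}) is non-decreasing in $\lambda$ by the first claim, composing the two monotonicities yields that $C_{\pi_\lambda}$ is non-increasing in $\lambda$.

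Alternatively — and perhaps more cleanly — the second claim admits an economic-style proof that sidesteps the threshold entirely: for $\lambda_1 < \lambda_2$, optimality of $\pi_{\lambda_i}$ for its own Lagrangian gives two inequalities
\begin{equation*}
    g_{\lambda_1} + \lambda_1 C_{\pi_{\lambda_1}} \leq g_{\lambda_2} + \lambda_1 C_{\pi_{\lambda_2}}, \qquad
    g_{\lambda_2} + \lambda_2 C_{\pi_{\lambda_2}} \leq g_{\lambda_1} + \lambda_2 C_{\pi_{\lambda_1}},
\end{equation*}
where $g_{\lambda_i}$ abbreviates the unconstrained AoII cost $J_{\pi_{\lambda_i}}$ and $C$ the transmission rate; here I use that the Lagrangian value $J(\pi,\lambda)$ decomposes as $g_\pi + \lambda(C_\pi - R)$ for a stationary policy. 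Adding the two inequalities gives $(\lambda_2 - \lambda_1)(C_{\pi_{\lambda_1}} - C_{\pi_{\lambda_2}}) \geq 0$, hence $C_{\pi_{\lambda_1}} \geq C_{\pi_{\lambda_2}}$. I expect the main obstacle to be the first claim: making the ``$\Phi_\lambda$ is non-increasing in $\lambda$'' step fully rigorous requires care because both $V_\lambda$ and $g_\lambda$ move with $\lambda$, and one must verify that these induced changes do not overturn the direct effect of the $+\lambda y$ term — the coupling argument in (i) above is the cleanest way I see to close that gap, since it reduces everything to the transparent monotonicity of total transmission count in the shared-policy comparison. The rate claim, by contrast, follows almost immediately once the threshold monotonicity is in hand, or directly from the saddle-point inequalities.
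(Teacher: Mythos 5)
Your argument for the second claim is correct and self-contained: the saddle-point inequalities $J_{\pi_{\lambda_1}} + \lambda_1 C_{\pi_{\lambda_1}} \leq J_{\pi_{\lambda_2}} + \lambda_1 C_{\pi_{\lambda_2}}$ and $J_{\pi_{\lambda_2}} + \lambda_2 C_{\pi_{\lambda_2}} \leq J_{\pi_{\lambda_1}} + \lambda_2 C_{\pi_{\lambda_1}}$, added together, give $C_{\pi_{\lambda_2}} \leq C_{\pi_{\lambda_1}}$. This is essentially a from-scratch proof of the fact the paper simply imports as \cite[Lemma 3.4]{bib:sennott} (the decomposition $J(\pi,\lambda) = J_\pi + \lambda(C_\pi - R)$ for stationary policies is justified here by the unichain property of Lemma~\ref{lem:unichain}). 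So the rate monotonicity needs neither the threshold claim nor the explicit formula~\eqref{eq:achiev_rate_result}.

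The gap is in the first claim, and you have correctly diagnosed it yourself: the ``$\Phi_\lambda$ non-increasing in $\lambda$'' step is not rigorous as stated, because the difference $V^1_\lambda - V^0_\lambda$ involves $V_\lambda(\delta{+}1,r{+}1)$ and $V_\lambda(\delta{+}1,0)$ with coefficients of opposite sign, and both these value functions, as well as $g_\lambda$, shift with $\lambda$; the naive submodularity-of-the-stage-cost argument does not directly control the induced shift. The paper closes this exact gap by running your chain of implications in the opposite direction, and you already possess every ingredient needed to do the same: you proved $C_{\pi_\lambda}$ is non-increasing in $\lambda$ (saddle-point argument), and you observed from~\eqref{eq:achiev_rate_result} and~\eqref{eq:q00} that $C_{\pi_\lambda}$ is a strictly decreasing function of the integer $n_0$ alone (since $M=\sum_{h}\sum_{r\leq h}m(h,r)$ is independent of $n_0$ and $\lambda$ enters $C_{\pi_\lambda}$ only through $n_0$). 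Composing these two facts forces $n^*_{0,\lambda}$ to be non-decreasing in $\lambda$ (if it ever decreased, the rate would strictly increase, a contradiction), and Proposition~\ref{prop:threshold_r} extends this to all $r$. Replace your advantage-function argument with this inversion and the proof is complete; as written, the first claim is not established.
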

\begin{proof}
The proof is detailed in the Appendix.
\end{proof}

\section{Proofs of the Structural Properties of the Lagrangian MDP}~\label{sec:Lagrangian_proofs}

In this section, we prove the results on the structural properties of the Lagrangian MDP written in Sec.~\ref{sec:Lagrangian}.

\subsection{Proof of Lemma~\ref{lem:value_delta}}\label{apdx:value_delta}

    From \eqref{eq:transition_y0}, \eqref{eq:transition_y1} and \eqref{eq:bellman_def} we obtain the following Bellman equations for our problem,
    \begin{align}
        &g + V(0,0) = \min\{f(0) + \alpha V(0,0)+(1-\alpha)V(1,0),\nonumber\\
            &\quad f(0) + \lambda + \alpha V(0,0) + (1-\alpha)V(1,0)\},\label{eq:bellman_00}\\
        &g + V(\delta,r) = \min\{f(\delta)+(1-\mu) V(\delta{+}1,0)+\mu V(0,0),\nonumber\\
        &\quad f(\delta)+\lambda + \gamma_1(r) V(\delta{+}1,r{+}1)+\gamma_2(r)V(\delta{+}1,0)\nonumber\\
        &\quad \!+\left(1-\gamma_1(r)-\gamma_2(r)\right)V(0,0)\},\ \text{for $\ \delta>0,\ r\geq 0$}\,.\label{eq:bellman_dr}
    \end{align}
    Notice that the first part of the minimum operators in \eqref{eq:bellman_00} and \eqref{eq:bellman_dr} corresponds to $y=0$, whereas the second part corresponds to $y=1$.
    
    The relative value iteration (RVI) algorithm is employed next, which approximates the value function $V(\cdot,\cdot)$ in an iterative fashion and, in particular, if it is convergent, it finds the true value function. Let $V_t(\cdot,\cdot)$ denote the estimation of the value function at iteration $t\in\mathbbm{N}$. Next, define the (exact) Bellman operator,
    
    \begin{equation}
        TV_t(S) \triangleq \min_{y}\{f(\delta)+\lambda y + \sum_{S'}P(S'\mid S, y)V_t(S')\}.
    \end{equation}
    
    Let $S_0=(0,0)$ be the arbitrary reference state the RVI is anchored to. Without loss of generality, we may assume that $V_0(S) = f(\delta),\  \forall\ S=(\delta,r)\in\mathcal{S}$. Then, the RVI updates its estimate as follows:
    
    \begin{definition}[Relative Value Iteration]
    \begin{equation}\label{eq:rvi_update}
        V_{t+1}(S) = TV_t(S) - TV_t(0,0),\quad \forall\ S\in\mathcal{S},\ t\in\mathbbm{N}\,.
    \end{equation}
    \end{definition}
    
    By setting $S=(0,0)$ in \eqref{eq:rvi_update}, we deduce that $V_t(0,0) = 0,\ \forall\ t\in\mathbbm{N^*}$. Since $V_t(\cdot,\cdot)$ converges to the true $V(\cdot,\cdot)$, this means that 
    \begin{equation}\label{eq:V00}
        V(0,0) = 0\,.
    \end{equation}
    Plugging~\eqref{eq:V00} in~\eqref{eq:bellman_00}, the min operator is solved immediately as the first argument is always smaller than the second. This yields the following useful remark.
    
    \begin{remark}\label{rem:opt_action_0}
        The optimal action for $S=(0,0)$ is to wait ($y=0$).
    \end{remark}
    
    Furthermore, plugging~\eqref{eq:V00} in both~\eqref{eq:bellman_00} and~\eqref{eq:bellman_dr} yields the following Bellman equations, which will be used in the rest of our proofs:
    
    \begin{align}
        &g = f(0) + (1-\alpha)V(1,0)\ ,\label{eq:bellman_00_simple}\\
        &g + V(\delta,r) = \min\{f(\delta)+(1-\mu) V(\delta{+}1,0),\nonumber\\
        &\quad f(\delta)+\lambda + \gamma_1(r) V(\delta{+}1,r{+}1)+\gamma_2(r)V(\delta{+}1,0)\},\label{eq:bellman_dr_simple}\\
                        &\omit\hfill\text{for $\ \delta>0,\ r\geq 0\,.$}\nonumber
    \end{align}

    Our goal is to show that $V_t(\delta^+,r)>V_t(\delta^-,r)\ \forall\ \delta^+>\delta^-,\ r\in\mathbbm{N},\ t\in\mathbbm{N}$. Suppose that $V_t(\delta^+,r)> V_t(\delta^-,r)$. The condition holds for $t=0$, since $V_0(\delta,r) = f(\delta)$ and $f(\cdot)$ is an increasing function. Suppose that the condition holds up to some iteration $t$. Then, we write the Bellman operators for the two states:
    
    \begin{align}
        &TV_t(\delta^+,r) = \min\{f(\delta^+)+(1-\mu) V_t(\delta^+ {+}1,0),\nonumber\\
        &\quad f(\delta^+)+\lambda + \gamma_1(r) V_t(\delta^+ {+}1,r{+}1)+\gamma_2(r)V_t(\delta^+ {+}1,0)\},\\
        &TV_t(\delta^-,r) = \min\{f(\delta^-)+(1-\mu) V_t(\delta^- {+}1,0),\nonumber\\
        &\quad f(\delta^-)+\lambda + \gamma_1(r) V_t(\delta^- {+}1,r{+}1)+\gamma_2(r)V_t(\delta^- {+}1,0)\}.
    \end{align}
    
    By assumption, we have that $V_t(\delta^+,r)\!>\!V_t(\delta^-,r)\ \forall\ \delta^+\!>\!\delta^-\,,\ r\in\mathbbm{N}$, while $f(\delta^+)\!>\!f(\delta^-)$ due to the monotonicity of $f(\cdot)$. From these two inequalities we deduce that $TV_t(\delta^+,r)\!>\!TV_t(\delta^-,r)$. Using this result in \eqref{eq:rvi_update}, it yields that $V_{t+1}(\delta^+,r)\!>\!V_{t+1}(\delta^-,r)$. By induction, we conclude that $V_{t}(\delta^+,r) \!>\! V_{t}(\delta^-,r)\ \forall\ t\in\mathbbm{N}$. Thus, the function $V(\delta,r)$ is increasing w.r.t $\delta$.

\subsection{Proof of Lemma~\ref{lem:value_r}}\label{apdx:value_r}

    In the following, we prove the result for $\mu<\alpha$. The case $\mu\geq\alpha$ can be proved similarly with the same approach. For the proof, we will invoke directly the Bellman equations~\eqref{eq:bellman_def}. Our goal is to show that $V(\delta,r^+)\leq V(\delta,r^-), \forall\ r^+\geq r^-$. When $\delta=0$, we always have $r^+=r^-=0$ and the inequality holds trivially. Therefore, we will examine the states where $\delta>0$.
    
    Let $S_0^-\triangleq(\delta,r^-)$ and $S_0^+\triangleq(\delta,r^+)$. Assume a sequence of actions $\{y_t^-\}$, corresponding to the optimal policy starting from state $S_0^-$. Let $\{S_t^-\}$ and $\{S_t^+\}$ be the sequence of states after following the actions $\{y_t^-\}$, starting from the states $S_0^-$ and $S_0^+$, respectively. Clearly, $\{y_t^-\}$ is a sub-optimal policy for $\{S_t^+\}$. Therefore,
    
    \begin{equation}\label{eq:V_plus_minus_ineq}
        V(S_0^+)\leq f(\delta) + \lambda y_0^- -g +\sum_{S^+_1}P(S^+_1\mid S^+_0, y_0^-)V(S^+_1).
    \end{equation}
    
    Suppose that $y_0^-=0$. Then, the action-conditional transition probabilities of $S^+_0$ and $S^-_0$ (ref.~\eqref{eq:transition_y0}) are the same and thus,
    
    \begin{equation}
    \begin{aligned}
        V(S_0^+)\leq& f(\delta) + \lambda y_0^- -g +\sum_{S^-_1}P(S^-_1\mid S^-_0, y_0^-)V(S^-_1)\\
                =&V(S_0^-).
    \end{aligned}
    \end{equation}

    Instead, suppose that a sequence of $k$ transmissions follows before the first wait action, i.e. $y_0^-\!=\!y_1^-\!=\!\dots\!=\!y_{k-1}^-\!=\!1$ and $y_k^-\!=\!0$. From~\eqref{eq:bellman_dr_simple}, we have the following relations for $V(\delta,r^+)$.

    \begin{equation}\label{eq:V_y1}
    \begin{aligned}
    V(\delta,r^+) = &f(\delta)+\lambda-g+ \gamma_2(r^+)V(\delta{+}1,0) \\
                    &+\gamma_1(r^+) V(\delta{+}1,r^+{+}1), \qquad\quad\quad\text{if $y=1$}\,.
    \end{aligned}
    \end{equation}

    The recursive relation~\eqref{eq:V_y1} expresses the value function $V(\delta,r^+)$ as \mbox{$(f(\delta)\!+\!\lambda\!-\!g)$}, plus the average value of the next state, which is $V(\delta{+}1,0)$ with probability $\gamma_2(r^+)$, and $V(\delta{+}1,r^+{+}1)$ with probability $\gamma_1(r^+)$. Therefore, the possible states of the next time slot that contribute to the value function of the state $(\delta,r^+)$ have a deterministic AoII variable equal to $\delta+1$ and a stochastic transmission count variable. We can exploit the Markovian nature of the transitions to represent their probability as a transition matrix. Define the ($1$-step) transition matrix,

    \begin{equation}
    P\!\triangleq\!
    \let\quad\thinspace{\bordermatrix{%
    &\scriptstyle\uline{0}&\scriptstyle\uline{1}&\scriptstyle\uline{2}&\scriptstyle\uline{3}&\dots&\scriptstyle\uline{r^+{+}k}\cr
    \hfill\scriptstyle\uline{0}\hfill&\gamma_2(0) &\gamma_1(0) &0 &0 &\dots &0\cr
    \hfill\scriptstyle\uline{1}\hfill&\gamma_2(1) &0 &\gamma_1(1) &0 &\dots &0\cr
    \hfill\scriptstyle\uline{2}\hfill&\gamma_2(2) &0 &0 &\gamma_1(2) &\dots &0\cr
    \hfill\scriptstyle\vdots\hfill&\vdots &  &  &  &\ddots &\cr
    \hfill\!\scriptstyle\uline{r^+{+}k{-}1}\!\!\!\hfill&\!\!\gamma_2(r^+{+}k{-}1)\!\! &0 &0 &0 &\dots &\gamma_1(r^+{+}k{-}1)\!\!\!\cr
    \hfill\scriptstyle\uline{r^+{+}k}\hfill&\gamma_2(r^+{+}k) &0 &0 &0 &\dots &0
    }}
    \end{equation}
    Thus, $P$ is the $1$-step action-dependent ($y\!=\!1$) probability matrix\footnote{$P$ is not a stochastic matrix, since the summation of its rows is lower than $1$. This is true since $P$ lacks the transitions to the state $(0,0)$. However, we do not lose any information since the value function of $(0,0)$ is zero.} for the transitions $(\delta,r^+)\!\rightarrow\!(\delta{+}1,0)$ and $(\delta,r^+)\!\rightarrow\!(\delta{+}1,r^+{+}1)$. As shown above, we index the rows and columns of $P$ from $0$ to $r^+{+}k$. Notice that the matrix is big enough to express the transitions for the $k$ sequential transitions starting from the state $(\delta,r^+)$. The $l$-th step transition probabilities are given by $P^l$. That is, the $l$-th step transition $(\delta,r^+)\!\overset{l}{\rightarrow}\!(\delta{+}l,i)$ has probability equal to $P^l_{r^+,i}$\,.
    
    With the definition of the matrix $P$ and the above observations at hand, we can expand the recursions in the RHS of~\eqref{eq:V_plus_minus_ineq} for the first $k$ steps, as shown in Table~\ref{tab:lemma3_proof_eq1}. The second line follows from the fact that the transition probabilities of $(\delta{+}k,i)$ are the same for every $i$ when $y_k^-\!=\!0$~\eqref{eq:transition_y0}. The fourth line follows from the fact that $\sum_{i}P_{r,i}^{s}$ is non-increasing with $r$ for any power $s$ of the matrix $P$, which we prove next.

    \begin{table*}
    \normalsize
    \hrule
    ~
    
    \begin{equation*}
    \begin{aligned}
        V(S_0^+)\leq &\sum_{l=0}^{k}\left[\left(f(\delta+l)+ \lambda y_l^- -g\right)\sum_{i=0}^{r^+{+}k}P_{r^+,i}^l\right] \!+\!\sum_{i=0}^{r^+{+}k}\left[P_{r^+,i}^{k}\sum_{S^+_{k+1}}\!\left[P\left((\delta{+}k{+}1,r^+_{k+1})\!\mid\! (\delta{+}k,i), y_k^-\right)V(\delta{+}k{+}1,r^+_{k+1})\right]\right]\\
        =&\sum_{l=0}^{k}\left[\left(f(\delta+l)+ \lambda y_l^- -g\right)\sum_{i=0}^{r^+{+}k}P_{r^+,i}^l\right] \!+\!\sum_{i=0}^{r^+{+}k}\left[P_{r^+,i}^{k}\sum_{S^-_{k+1}}\!\left[P\left((\delta{+}k{+}1,r^-_{k+1})\!\mid\! (\delta{+}k,r^-_{k}), y_k^-\right)V(\delta{+}k{+}1,r^-_{k+1})\right]\right]\\
        =&\sum_{l=0}^{k}\left[\left(f(\delta+l)+ \lambda y_l^- -g\right)\sum_{i=0}^{r^+{+}k}P_{r^+,i}^l\right] \!+\!\sum_{i=0}^{r^+{+}k}\left[P_{r^+,i}^{k}\right]\sum_{S^-_{k+1}}\!\left[P\left((\delta{+}k{+}1,r^-_{k+1})\!\mid\! (\delta{+}k,r^-_{k}), y_k^-\right)V(\delta{+}k{+}1,r^-_{k+1})\right]\\
        \overset{!}{\leq}&\sum_{l=0}^{k}\left[\left(f(\delta+l)+ \lambda y_l^- -g\right)\sum_{i=0}^{r^-{+}k}P_{r^-,i}^l\right] \!+\!\sum_{i=0}^{r^-{+}k}\left[P_{r^-,i}^{k}\right]\sum_{S^-_{k+1}}\!\left[P\left((\delta{+}k{+}1,r^-_{k+1})\!\mid\! (\delta{+}k,r^-_{k}), y_k^-\right)V(\delta{+}k{+}1,r^-_{k+1})\right]\\
        = & V(S_0^-)
    \end{aligned}
    \end{equation*}
    
    \hrule
    \caption{Steps for proving the inequality $V(S_0^+)\leq V(S_0^-)$.}
    \label{tab:lemma3_proof_eq1}
    \end{table*}

    We will use induction to prove that $\sum_{i}P_{r,i}^{s}$ is non-increasing with $r$ for any power $s$ of the matrix $P$. We know that the property holds for $s=0$ and $s=1$, since $\sum_{i}P_{r,i}^{0}=1$ and $\sum_{i}P_{r,i}^{1} = \gamma_2(r)+\gamma_1(r)$, which is non-increasing when $\mu<\alpha$. Suppose that the property holds up to some power $h$. We will prove that it must also hold for the power $h\!+\!1$. Let $P_{r,:}^{h+1}$ denote the $r$-th row vector of $P^{h+1}$. We can write $P_{r,:}^{h+1}$ as the following product,

    \begin{equation}
        P_{r,:}^{h+1} = P_{r,:} \cdot P^{h}\,.
    \end{equation}
    Expanding the product and summing all the elements, we derive the following relation
    \begin{equation}\label{eq:P_recursive_relation1}
        \sum_{i}P_{r,i}^{h+1} = \gamma_2(r)\sum_{i}P_{0,i}^{h} + \gamma_1(r)\sum_{i}P_{r+1,i}^{h}\,.
    \end{equation}
    After replacing the $\gamma_1(r)$ and $\gamma_2(r)$ with their definitions~\eqref{eq:transition_y0},~\eqref{eq:transition_y1} and some minor algebraic manipulations,~\eqref{eq:P_recursive_relation1} becomes
    \begin{equation}\label{eq:P_recursive_relation2}
    \begin{aligned}
        \sum_{i}P_{r,i}^{h+1} = &(1-\alpha)\sum_{i}P_{0,i}^{h} \\
        &+ \left(1-p(r)\right)\left(\alpha\sum_{i}P_{r+1,i}^{h}-\mu\sum_{i}P_{0,i}^{h}\right).
    \end{aligned}
    \end{equation}
    
    The first term in~\eqref{eq:P_recursive_relation2} is constant w.r.t. $r$. The second term is a product of two non-increasing functions of $r$. In particular, $\left(1-p(r)\right)$ is non-increasing since $p(r)$ is non-decreasing, and $\left(\alpha\sum_{i}P_{r+1,i}^{h}\!-\!\mu\sum_{i}P_{0,i}^{h}\right)$ is non-increasing since $\sum_{i}P_{r+1,i}^{h}$ is non-increasing by the induction hypothesis. Therefore, to prove that $\sum_{i}P_{r,i}^{h+1}$ is non-increasing with $r$, it suffices to show that $\left(\alpha\sum_{i}P_{r+1,i}^{h}\!-\!\mu\sum_{i}P_{0,i}^{h}\right)$ is always non-negative. It is easy to verify that, for any row $u$, $\sum_{i}P_{u,i}^{h}$ is minimized when $p(x)=1\ \forall\ x\in\mathbbm{N}$ and maximized when $p(x)=0\ \forall\ x\in\mathbbm{N}$. For those extreme cases, we have

    \begin{align}
        \left.\sum_{i}P_{u,i}^{h}\right\rvert_{p(x)=1\ \forall\ x\in\mathbbm{N}}=&\sum_{j=0}^h(1-\alpha)^j = \frac{1-(1-\alpha)^{h+1}}{\alpha}\\
        \left.\sum_{i}P_{u,i}^{h}\right\rvert_{p(x)=0\ \forall\ x\in\mathbbm{N}}=&\sum_{j=0}^h(1-\mu)^j = \frac{1-(1-\mu)^{h+1}}{\mu}
    \end{align}
    Hence,
    \begin{equation}
    \begin{aligned}
        \alpha\sum_{i}P_{r+1,i}^{h}\!-\!\mu\sum_{i}P_{0,i}^{h}\geq &\alpha\left.\sum_{i}P_{r+1,i}^{h}\right\rvert_{p(x)=1\ \forall\ x\in\mathbbm{N}}\\
        &-\mu \left.\sum_{i}P_{0,i}^{h}\right\rvert_{p(x)=0\ \forall\ x\in\mathbbm{N}}\\
        =&-(1-\alpha)^{h+1}+(1-\mu)^{h+1}\\
        >&0\,.
    \end{aligned}
    \end{equation}

    Thus, $\sum_{i}P_{r,i}^{h+1}$ is non-increasing with $r$, and by induction we infer that $\sum_{i}P_{r,i}^{s}$ is non-increasing with $r$ for any power $s$ of the matrix $P$. This concludes our proof.

\subsection{Proof of Proposition~\ref{prop:threshold}}\label{apdx:threshold}

    First, we prove that the optimal policy is threshold-based under the assumption that $\mu<\alpha$. The following lemma will be used to this end.
    
    \begin{lemma}\label{lem:partial}
    Let a two-variable function $f(x,y):\mathcal{X}\times \mathcal{Y}\mapsto \mathbbm{R}$ that is increasing w.r.t. $x$ and non-increasing w.r.t. $y$. That is, $\frac{\partial f(x,y)}{\partial x}>0$ and $\frac{\partial f(x,y)}{\partial y}\leq 0$. Then, the following inequality holds,
    
    \begin{equation}
        \frac{\partial f(x,y)}{\partial x}\geq \frac{\partial f(x,y+h)}{\partial x}\quad \forall\ x\in\mathcal{X},\ (y+h)\in \mathcal{Y},\ h\geq 0\,.
    \end{equation}
    \end{lemma}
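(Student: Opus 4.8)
The plan is to reduce the two-point inequality to a sign condition on the mixed second derivative of $f$ and then recover it by integration in the $y$-variable. Fixing $x$ and $h\ge 0$ and applying the fundamental theorem of calculus,
\[
\frac{\partial f(x,y)}{\partial x}-\frac{\partial f(x,y{+}h)}{\partial x}=-\int_{0}^{h}\frac{\partial^{2} f(x,y{+}s)}{\partial y\,\partial x}\,ds,
\]
so the asserted inequality is equivalent to the integrand being non-positive on the strip $\{(x,y{+}s):0\le s\le h\}$, i.e.\ to the decreasing-differences (submodularity) property $\partial^{2} f/\partial x\,\partial y\le 0$. Thus the whole argument turns on controlling this cross partial; the integration step itself is routine once its sign is secured.

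An equivalent and often more convenient reformulation I would carry in parallel is to study $g(x):=f(x,y)-f(x,y{+}h)$. Because $f$ is non-increasing in $y$ we get immediately $g(x)=-\int_{0}^{h}\partial_y f(x,y{+}s)\,ds\ge 0$, and the conclusion of the lemma is precisely the statement that this non-negative $g$ is \emph{non-decreasing} in $x$, namely $g'(x)=\partial_x f(x,y)-\partial_x f(x,y{+}h)\ge 0$. By Schwarz's theorem the target $\partial^{2}f/\partial x\,\partial y\le 0$ may equally be read as ``the non-negative quantity $-\partial_y f$ is non-decreasing in $x$,'' which is the form best matched to a monotone-transition verification. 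In a purely discrete reading the same scheme applies verbatim, with $\partial_x,\partial_y$ replaced by the forward differences $f(x{+}1,\cdot)-f(x,\cdot)$ and $f(\cdot,y{+}1)-f(\cdot,y)$ and the integral becoming a telescoping sum.

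The main obstacle is exactly this cross-partial sign. The two hypotheses $\partial f/\partial x>0$ and $\partial f/\partial y\le 0$ fix only the signs of the \emph{first} partials; they constrain neither how $\partial_x f$ varies with $y$ nor, equivalently, how $\partial_y f$ varies with $x$, and so first-order monotonicity alone does not force $g$ to be monotone. The substance of the proof is therefore establishing the decreasing-differences property, which must be drawn from the additional structure present where the lemma is invoked rather than from the bare monotonicity assumptions: concretely, I would obtain $\partial^{2} f/\partial x\,\partial y\le 0$ from the relative-value-iteration representation of the value function together with Lemmas~\ref{lem:value_delta} and~\ref{lem:value_r} (monotonicity of $V$ in $\delta$ and, for $\mu<\alpha$, in $r$) and the non-increasing row-sum property of the matrix $P$ proved in Section~\ref{apdx:value_r}, and then close the argument by the integral identity of the first paragraph. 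Making explicit the extra submodularity ingredient is the delicate point on which the lemma genuinely rests.
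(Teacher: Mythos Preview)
Your analysis is sound and, in fact, more careful than the paper's own argument. You are right that the stated hypotheses---strict monotonicity in $x$ and weak anti-monotonicity in $y$---do not by themselves force the decreasing-differences conclusion $\partial_x f(x,y)\ge \partial_x f(x,y{+}h)$. A concrete counterexample on $\mathcal{X}=\mathcal{Y}=[0,\infty)$ is $f(x,y)=x-y-\dfrac{y}{x+1}$: here $\partial_x f=1+\dfrac{y}{(x+1)^2}>0$ and $\partial_y f=-1-\dfrac{1}{x+1}<0$, yet $\partial_x f$ is strictly \emph{increasing} in $y$.

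The paper takes a very different route: a three-line contradiction argument. It supposes $\partial_x f(x,y)<\partial_x f(x,y{+}h)$ and asserts that then $f(x,y)<f(x,y{+}h)$ for all sufficiently large $x$, contradicting the $y$-monotonicity. That implication is a non-sequitur: even a global strict ordering of the $x$-derivatives only says the gap $f(x,y{+}h)-f(x,y)$ is increasing in $x$, not that it eventually becomes positive; in the counterexample above $f(x,y{+}h)-f(x,y)=-h\bigl(1+\tfrac{1}{x+1}\bigr)\to -h<0$.

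Your proposed repair---to obtain the missing submodularity $\partial^2 f/\partial x\,\partial y\le 0$ from the specific structure of the value function (via Lemmas~\ref{lem:value_delta}--\ref{lem:value_r} and the row-sum monotonicity of $P$ established in Section~\ref{apdx:value_r}) and then close with the integral/telescoping identity---is the correct way forward. Strictly speaking this is not a proof of the lemma as stated but a proof of the inequality actually needed in Proposition~\ref{prop:threshold}, with an added decreasing-differences hypothesis. That is the honest fix; the lemma in its present generality is false, and your reduction to the mixed-partial sign is exactly the content that has to be supplied.
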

    \begin{proof}[Proof of Lemma~\ref{lem:partial}]
    Assume that $\frac{\partial f(x,y)}{\partial x}\!<\! \frac{\partial f(x,y{+}h)}{\partial x}$. Then, there exists some $x^* \in \mathcal{X}$ s.t. $f(x,y)\!<\!f(x,y{+}h),\ \forall\ x\!\geq\! x^*$. However, this contradicts the monotonicity of $f(x,y)$. Consequently, the assumption is wrong.
    \end{proof}
    
    We already showed that for $\delta=0$, the optimal policy is $y=0$ (Remark~\ref{rem:opt_action_0}). Thus we consider the states where $\delta>0$. Define the action-dependent value functions for $y=1$ and $y=0$, respectively, from~\eqref{eq:bellman_dr_simple}:
    
    \begin{align}
        &V^1(\delta,r) \triangleq -g + f(\delta)+\lambda + \gamma_1(r) V(\delta{+}1,r{+}1)\nonumber\\
                        &\phantom{V^1(\delta,r) \triangleq}+\gamma_2(r)V(\delta{+}1,0),\\
         &V^0(\delta,r)\triangleq -g + f(\delta)+(1-\mu) V(\delta{+}1,0).
    \end{align}
    In addition, define the difference,
    
    \begin{equation}\label{eq:delta}
    \begin{aligned}
        \Delta V(S) &\triangleq V^1(S)-V^0(S) \\
                    &= \lambda \!+\! \gamma_1(r)V(\delta{+}1,r{+}1)\!+\!(\gamma_2(r)\!-\!1\!+\!\mu)V(\delta{+}1,0)\\
                    &= \lambda + (\alpha-\alpha p(r))V(\delta{+}1,r{+}1)\\
                    &\phantom{= \lambda }-(\alpha-\mu p(r))V(\delta{+}1,0).
    \end{aligned}
    \end{equation}
    
    The partial derivative of $\Delta V(S)$ w.r.t. $\delta$ is equal to
    
    \begin{equation}\label{eq:partial_delta}
    \begin{aligned}
        \frac{\partial \Delta V(S)}{\partial \delta} &= (\alpha-\alpha p(r))\frac{\partial V(\delta{+}1, r{+}1)}{\partial \delta}\\
        &\phantom{=}-(\alpha-\mu p(r))\frac{\partial V(\delta{+}1,0)}{\partial \delta}.   
    \end{aligned}
    \end{equation}
    
    At this point, notice that~\eqref{eq:partial_delta} is the difference of two positive terms, since $p(r)\in (0,1)$ and $\mu<\alpha$. Lemma~\ref{lem:partial} will be employed to determine its sign. Specifically, by Lemma~\ref{lem:value_r} and Lemma~\ref{lem:partial} we have that
    
    \begin{equation}\label{eq:auxil_partial_1}
        \frac{\partial V(\delta{+}1, r{+}1)}{\partial \delta} \leq \frac{\partial V(\delta{+}1,0)}{\partial \delta}\,.
    \end{equation}
    Furthermore, since $\mu<\alpha$,
    
    \begin{equation}\label{eq:auxil_partial_2}
        \alpha-\alpha p(r)<\alpha-\mu p(r). 
    \end{equation}
    \noindent Multiplying~\eqref{eq:auxil_partial_1} and~\eqref{eq:auxil_partial_2} by parts, we get
    
    \begin{equation}
        \frac{\partial \Delta V(S)}{\partial \delta} < 0\ .
    \end{equation}
    
    We conclude that $V^0(\delta,r)$ increases at a higher rate than $V^1(\delta,r)$ does. Therefore, as $\delta$ increases, it is possible that $V^0(\delta,r)$ becomes larger than $V^1(\delta,r)$. Let this point be $n^*_{r,\lambda}\,$, where the subscripts $r$ and $\lambda$ indicate that it depends on the transmission count $r$ and the Lagrangian parameter $\lambda$. Then, for all $\delta\geq n^*_{r,\lambda}\,$, it holds that $V^0(\delta,r)>V^1(\delta,r)$, which means that the optimal policy is $y=1$, while for all $\delta<n^*_{r,\lambda}$ it is $y=0$. Therefore, when $\mu<\alpha$, the optimal policy is threshold-based w.r.t. $\delta$.
    
    Next, we prove that the optimal policy is to always wait under the assumption that $\mu\geq\alpha$. In this case, we have that
    
    \begin{equation}\label{eq:auxil_partial_3}
        \alpha-\alpha p(r)\geq\alpha-\mu p(r),
    \end{equation}
    and by Lemma 3,
    \begin{equation}\label{eq:auxil_partial_4}
        V(\delta{+}1, r{+}1)\geq V(\delta{+}1, 0).
    \end{equation}
    Multiplying~\eqref{eq:auxil_partial_3} and~\eqref{eq:auxil_partial_4} by parts, it follows that the function $\Delta V(S)$ in \eqref{eq:delta} is non-negative, i.e., $V^1(\delta,r)\geq V^0(\delta,r)$. In other words, the optimal policy is always $y=0$.

\subsection{Proof of Proposition~\ref{prop:threshold_r}}\label{apdx:threshold_r}
    
    It suffices to show that the function $\Delta V(\delta,r)=V^1(S)-V^0(S)$ in~\eqref{eq:delta} is non-increasing with $r$. In the proof of Lemma 3 (Sec.~\ref{apdx:value_r}), we essentially showed that the action-dependent value function $V^1(S)$ is non-increasing with $r$. On the other hand, $V^0(S)$ is independent of $r$. Hence, $\Delta V(\delta,r)$ is non-increasing with $r$.

\subsection{Proof of Theorem~\ref{thm:optimal_policy}}\label{apdx:optimal_policy}
    We look for the threshold $n_0\in\mathbbm{N^*}$ such that the optimal action is $y=1$ when $\delta\geq n_0$, and $y=0$ when $\delta< n_0$. We exclude zero from the set of interest due to Remark~\ref{rem:opt_action_0}.
    
    Examining~\eqref{eq:bellman_dr_simple}, for all $\delta\geq n_0$ the second branch of the minimum operator must be smaller than the first branch and thus
    
    \begin{equation}\label{eq:thresh_ineq}
    \begin{aligned}
        &f(\delta)+(1\!-\!\mu) V(\delta{+}1,0)>f(\delta)+\lambda+\gamma_1(r)V(\delta{+}1,r{+}1)\\
         &\phantom{f(\delta)+(1-\mu) V(\delta{+}1,0)>}+\gamma_2(r)V(\delta{+}1,0)\\
        \Rightarrow& (1\!-\!\mu\!-\!\gamma_2(r))V(\delta{+}1,0)\!-\!\gamma_1(r)V(\delta{+}1,r{+}1)> \lambda\,,\\
        &\omit\hfill \text{for $\ \delta\geq n_0$}\,.
    \end{aligned}
    \end{equation}
    
    Besides, from~\eqref{eq:bellman_dr_simple} we know that
    
    \begin{equation}\label{eq:V_dr_recursion}
    \begin{aligned}
        V(\delta,r)=&-g_{n_0}+f(\delta)+\lambda+\gamma_1(r)V(\delta{+}1,r{+}1)\\
                        &+\gamma_2(r)V(\delta{+}1,0)\ ,\quad \text{for $\ \delta\geq n_0$}\,,
    \end{aligned}
    \end{equation}
    where $g_{n_0}$ is the average cost achieved by employing the threshold $n_0$.
    
    Combining~\eqref{eq:thresh_ineq} and~\eqref{eq:V_dr_recursion} and setting $r=0$, we obtain
    
    \begin{equation}\label{eq:ineq_n}
        (1-\mu) V(\delta{+}1,0)-V(\delta,0) +f(\delta)-g_{n_0}>0\,,\quad \text{for $\ \delta\geq n_0$}\,,
    \end{equation}
    which implies that
    
    \begin{equation}
    \begin{aligned}
        n^*_{0,\lambda} = \min\{&n_0\in\mathbbm{N^*}\!:\\
        &(1\!-\!\mu) V(n_0{+}1,0)\!-\!V(n_0,0)\!+\!f(n_0)\!-\!g_{n_0}\!>\!0\},
    \end{aligned}
    \end{equation}
    
    Next, we have to calculate $V(n_0,0)$, $V(n_0{+}1,0)$ and $g_{n_0}$. We can expand the recursive relation in~\eqref{eq:V_dr_recursion} as we did with~\eqref{eq:V_y1} in the proof Lemma 3~(Sec.~\ref{apdx:value_r}). Thus, we derive the expression of $V(\delta,r)$ for $\delta\geq n_0$ in Table~\ref{tab:V_dr_analytic}. Since~\eqref{eq:V_dr_recursion} holds for both $\delta=n_0$ and $\delta=n_0{+}1$, the expression can be used to calculate both $V(n_0,0)$ and $V(n_0{+}1,0)$. Similarly, we have that
    
    \begin{equation}\label{eq:V_dr_recursion<}
            V(\delta,r) = -g_{n_0}+f(\delta)+(1-\mu) V(\delta{+}1,0),\ \text{for $\ 0<\delta<n_0$}\,,
    \end{equation}
    which leads to the expression for $0<\delta<n_0$ in Table~\ref{tab:V_dr_analytic}.
    
    \begin{table*}
    \normalsize
    \hrule
    ~
    
        \begin{equation*}
            V(\delta,r) = \begin{cases}\sum\limits_{l=0}^{\infty}\left[\left(f(\delta+l)+ \lambda -g\right)\sum\limits_{i=0}^{\infty}P_{r,i}^l\right] &\text{if $\delta \geq n_0$}\,,\\
            -g_{n_0} \sum\limits_{i=0}^{n_0-\delta-1} (1-\mu)^i + \sum\limits_{i=0}^{n_0-\delta-i}(1-\mu)^i f(\delta{+}i) +(1-\mu)^{n_0 -\delta}V(n_0,r) &\text{if $0<\delta<n_0$}\,.
            \end{cases}
        \end{equation*}
        
    \hrule
    \caption{The analytic expressions of $V(\delta,r)$.}
    \label{tab:V_dr_analytic}
    \end{table*}
    
    Furthermore, we can rewrite~\eqref{eq:bellman_00_simple} as follows,
    
    \begin{equation}
        g_{n_0} = f(0)+(1-\alpha)V(1,0).\label{eq:g_n_1}
    \end{equation}
    Setting $\delta=1,r=0$ in the expressions of $V(\delta,r)$ in Table~\ref{tab:V_dr_analytic}, we derive
    
    \begin{equation}\label{eq:V_10}
    \begin{aligned}
        V(1,0) &= -g_{n_0}\sum_{i=0}^{n_0-2}(1-\mu)^i\\
                &\phantom{=}+\sum_{i=0}^{n_0-2} (1-\mu)^i f(i{+}1)+(1-\mu)^{n_0-1}V(n_0,0).
    \end{aligned}
    \end{equation}
    To verify that~\eqref{eq:V_10} holds for all values of $n_0$, first notice that if $n_0>1$, then~\eqref{eq:V_10} follows directly from the replacement of $\delta=1$ at the expression of $V(\delta,r)$ for $0\!<\!\delta\!<\!n_0$ in~Table~\ref{tab:V_dr_analytic}. On the other hand, if $n_0\leq 1$, then, by Remark~\ref{rem:opt_action_0} it must hold that $n_0=1$, in which case~\eqref{eq:V_10} simply states that $V(1,0)=V(1,0)$.
    
    Combining~\eqref{eq:g_n_1} and~\eqref{eq:V_10} we get
    
    \begin{equation}\label{eq:g_n_2}
    \begin{aligned}
        g_{n_0} &= f(0) + (1-\alpha)\Bigg(-g_{n_0}\sum_{i=0}^{n_0-2}(1-\mu)^i \\
        &\phantom{=}+ \sum_{i=0}^{n_0-2}(1-\mu)^i f(i{+}1)+(1-\mu)^{n_0-1}V(n_0,0)\Bigg) \\
        \Rightarrow g_{n_0} &= \frac{\frac{f(0)}{1-\alpha}+\sum\limits_{i=0}^{n_0-2} (1{-}\mu)^i f(i{+}1)+(1{-}\mu)^{n_0-1}V(n_0,0)}{\frac{1}{1-\alpha}+\sum\limits_{i=0}^{n_0-2}(1-\mu)^i}\,.
    \end{aligned}
    \end{equation}
    Replacing $V(n_0,0)$ from Table~\ref{tab:opt_policy_analytic}, we end up with the expression for the $g_{n_0}$ written in Table~\ref{tab:opt_policy_analytic}.

\section{Algorithmic Implementation}\label{sec:algorithm}
Up to this point, the optimal transmission policy has been derived on a theoretical basis. This section examines the practical computation of the optimal policy. %Our approach is similar to~\cite{bib:aoii}.

First, we elaborate on the computation of the series in the expressions of $V(n_0,0)$ and $g_{n_0}$ in Table~\ref{tab:opt_policy_analytic} and $q_{0,0}$ in~\eqref{eq:q00}. The series $\sum_{i=0}^{n_0-2}(1-\mu)^i f(i{+}1)$, $\sum_{i=0}^{n_0-2}(1-\mu)^i$ and $\sum_{k=1}^{n_0-1}(1-\mu)^{k-1}$ are finite geometric series and can be evaluated directly; the first via a direct evaluation and the last two via their closed-form expression. The series $\sum_{l=0}^{\infty}\sum_{i=0}^{\infty}P_{0,i}^l$ is a convergent infinite series. More precisely, the sequence $\sum_{i=0}^{\infty}P_{0,i}^l$ converges to zero as $l$ increases, since the elements of $P$ are all lower than $1$, and the summation of every row is also lower than $1$. Moreover, $P_{0,i}^l$ is non-zero only for $i\!\leq\! l\!+\!1$. Consequently, the infinite summation can be approximated by the partial summation $\sum_{l=0}^{l_\epsilon}\sum_{i=0}^{l+1}P_{0,i}^l$, where $l_\epsilon=\argmin_l \{\sum_{i=0}^{l+1}P_{0,i}^l<\epsilon\}$, and $\epsilon$ being a precision constant. The same holds for $\sum_{l=0}^{\infty}\sum_{i=0}^{\infty}\left[P_{0,i}^l\right]f(n_0{+}l)$ due to the assumption in~\eqref{eq:bound_cond}. Similar arguments hold for $\sum_{h=0}^\infty \sum_{r=0}^{h} m(h,r)$.

Having validated the computability of all necessary functions, and due to Proposition~\ref{prop:rate_monotonic}, the algorithmic steps described in~\cite{bib:aoii} are utilized to compute the optimal policy. In the following, we give a summary of the algorithm and refer the reader to~\cite{bib:aoii} for more details.

As a first step, we describe how Theorem~\ref{thm:cmdp_policy} can be used to implement the optimal policy. Consider a method suggesting that some $\lambda$ verifies the inequality $C_{\pi_\lambda}\leq R$ in~\eqref{eq:lambda_star}. To check the validity of the suggestion, we first use Theorem~\ref{thm:optimal_policy} to find the Lagrange-optimal $n_0$ and then $C_{\pi_\lambda}$ is calculated via Proposition~\ref{prop:rate}. This process is straightforward but does not solve the problem of finding the infimum of such $\lambda$'s.

To tackle this issue, we utilize Proposition~\ref{prop:rate_monotonic}. First, we rely on the non-increasing property of $C_{\pi_\lambda}$, which is equivalent to stating that the sequence $\{C_{\pi_\lambda}\}$, for $\lambda\in(0,+\infty)$, is non-increasing. As such, we can use a binary search algorithm to find $\lambda^*$, whose computational complexity is at the order of~$O(\log \lambda^*)$~\cite{bib:algorithms}. The same method can be applied in finding $n^*_{0,\lambda}\,$. In particular, the non-decreasing property of $n^*_{0,\lambda}$ implies that the LHS of the condition inside the minimum operator in~\eqref{eq:optimal_thresh} is non-decreasing with $\lambda$. Therefore, the binary search algorithm can be used to find $n^*_{0,\lambda}$ with complexity $O(\log n^*_{0,\lambda})$.

\section{Numerical Results}\label{sec:results}

In this section, we perform a numerical evaluation of the average AoII under the optimal transmission policy. The results are obtained from simulations of a horizon equal to $T=10^5$. We study the impact of (i) the source dynamics, (ii) the HARQ protocol and (iii) the resource constraint.

Similar to~\cite{bib:aoi_harq}, motivated by previous research on HARQ (ref.~\cite{bib:harq_survey,bib:arq_per}), we model the probability of failed decoding as an exponentially decreasing function, i.e.,

\begin{equation}
    p(r) = 1 - p_e c^r\,,\quad \text{for $\ 0\leq r\leq r_{max}$}\,,
\end{equation}
where $c\in(0,1]$ is the decaying error rate constant, $p_e$ is the packet error rate of the first packet, and $r_{max}$ is the maximum number of allowed retransmissions. In general, $p_e$ and $c$ depend on the channel conditions and the HARQ protocol. For example, a very noisy channel implies a high $p_e$ value. Also, the way that re-transmitted packets are chosen affects $c$. Moreover, for fading channels, $c$ is generally lower in fast-fading channels than in slow-fading (ref.~\cite{bib:harq_fading}).

In the typical case where $r_{max}$ is finite, if the decoder cannot decode with $r_{max}$ packets, then the decoding fails, the packets are discarded and a new round of transmissions begins. We can impose this mechanism with the modulus operator to define the probability function

\begin{equation}
    p_{m}(r) \triangleq 1 - p_e c^{r \bmod (r_{max}{+}1)}\,,\quad \text{for $\ 0\leq r$}\,.
\end{equation}

Additionally, when HARQ is used without soft combining or the standard ARQ is used instead, the previously transmitted packets are not used for decoding. Hence, the decaying constant is $c=1$, or equivalently $r_{max}=0$.

For our experiments, we employ a linear AoII function,

\begin{equation}
    f(\delta_t) = \delta_t\,.
\end{equation}

\begin{figure}
    \centering
    \includegraphics[width=.94\linewidth]{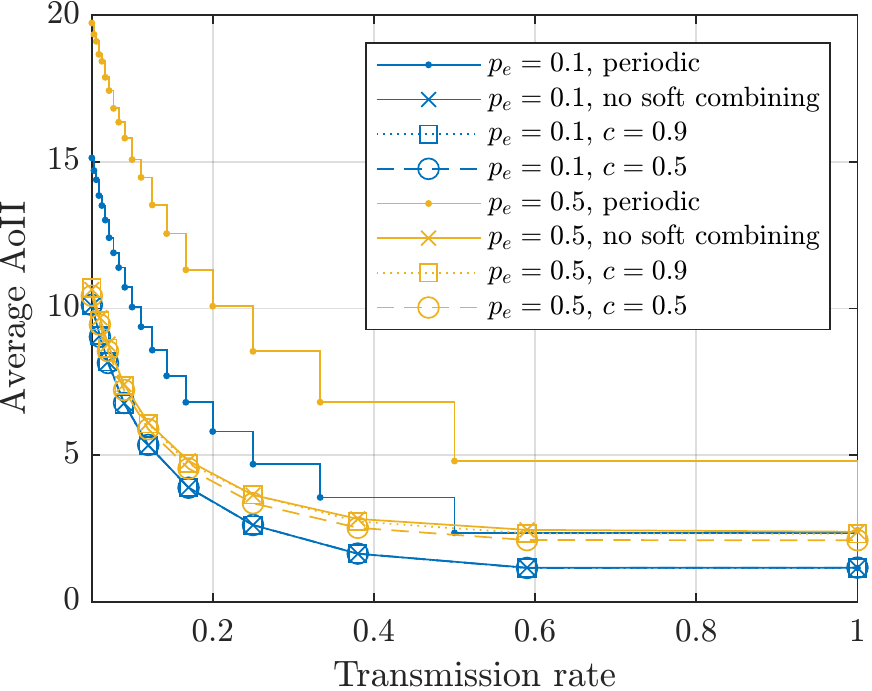}
    \caption{Average AoII versus the transmission rate constraint of the optimal threshold-based policy. The source model parameters are $\alpha=0.5$ and $N=16$ and the maximum retransmission count is $r_{max}=2$.}
    \label{fig:p0_c_N_1}
    ~\\~\\
    \includegraphics[width=.94\linewidth]{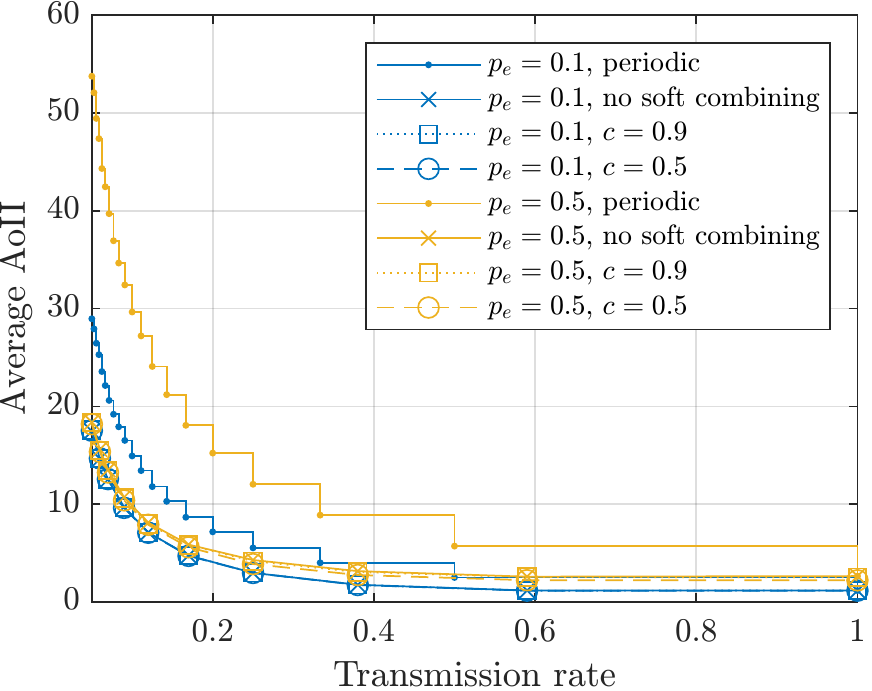}
    \caption{Average AoII versus the transmission rate constraint of the optimal threshold-based policy. The source model parameters are $\alpha=0.5$ and $N=128$ and the maximum retransmission count is $r_{max}=2$.}
    \label{fig:p0_c_N_2}
\end{figure}

\begin{figure}
    \centering
    \includegraphics[width=.94\linewidth]{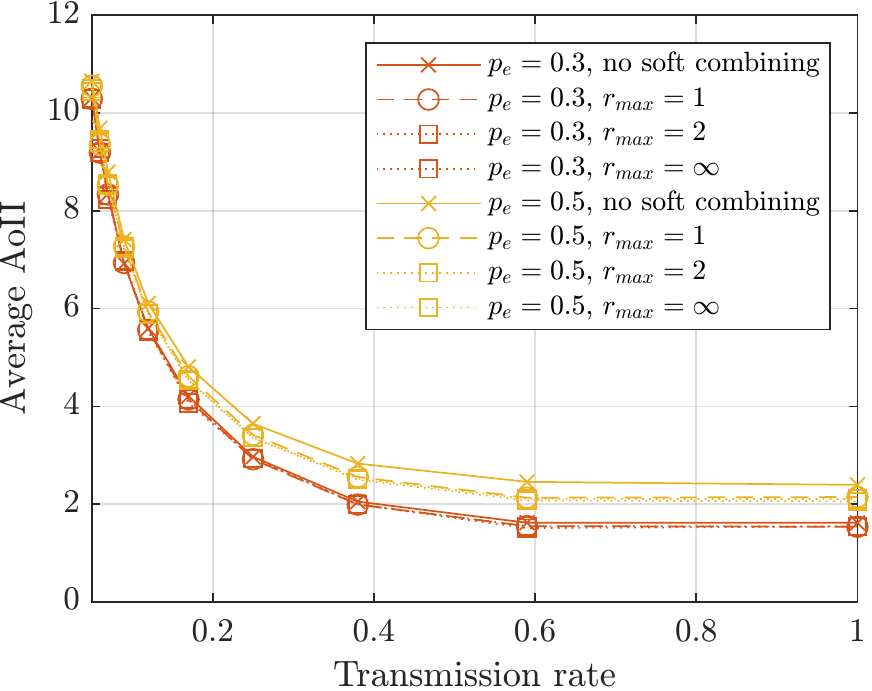}
    \caption{Average AoII versus the transmission rate constraint of the optimal threshold-based policy. The source model parameters are $\alpha=0.5$ and $N=16$ and the HARQ decaying error rate constant is $c=0.5$.}
    \label{fig:p0_r_N_1}
    ~\\~\\
    \includegraphics[width=.94\linewidth]{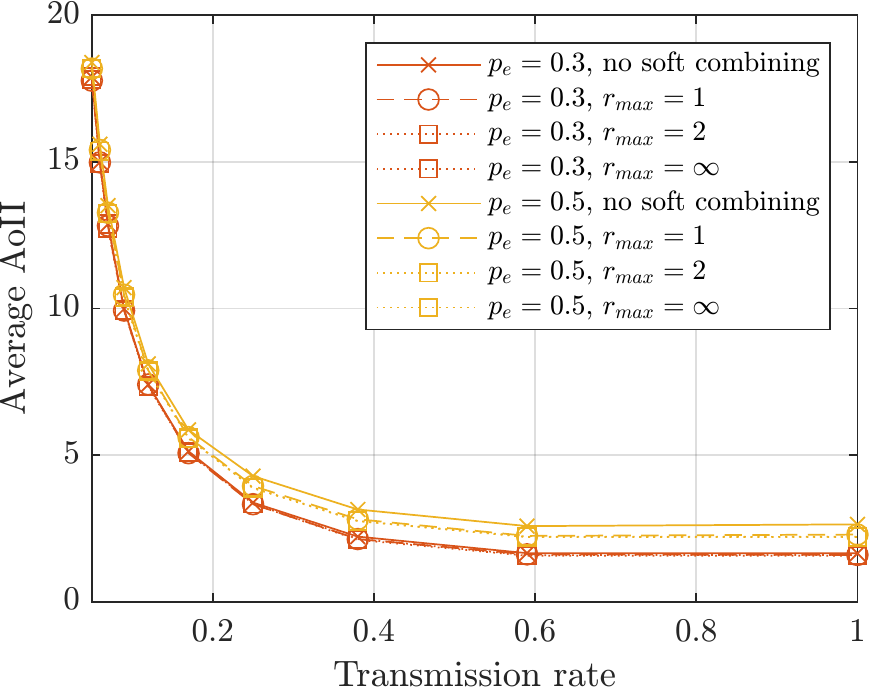}
    \caption{Average AoII versus the transmission rate constraint of the optimal threshold-based policy. The source model parameters are $\alpha=0.5$ and $N=128$ and the HARQ decaying error rate constant is $c=0.5$.}
    \label{fig:p0_r_N_2}
\end{figure}

Our first results concern the case where the optimal policy is threshold-based, i.e. when $\mu\!<\!\alpha$. Figs.~\ref{fig:p0_c_N_1}-\ref{fig:p0_c_N_2} show the average AoII versus the transmission rate for two different source models (same $\alpha$, different $N$) and various $p_e$ and $c$ values, having the maximum number of re-transmissions fixed to $r_{max}\!=\!2$. The scenario where HARQ is without soft combining is also included. As a baseline reference, a deterministic periodic transmission policy that satisfies the resource constraint is included. The periodic policy transmits every $\lceil 1/R \rceil$ time slots. There is a clear advantage of the optimal policy against the periodic one. Additionally, for both sources, there is a notable difference between HARQ with and without soft combining only in the high error rate regime. Besides, there is a large difference in the achieved AoII between the sources with few and many states ($N\!=\!16$ and $N\!=\!128$, respectively) when the allowed transmission rate is small, but they get close as the transmission rate increases.

\begin{figure}
    \centering
    \includegraphics[width=.94\linewidth]{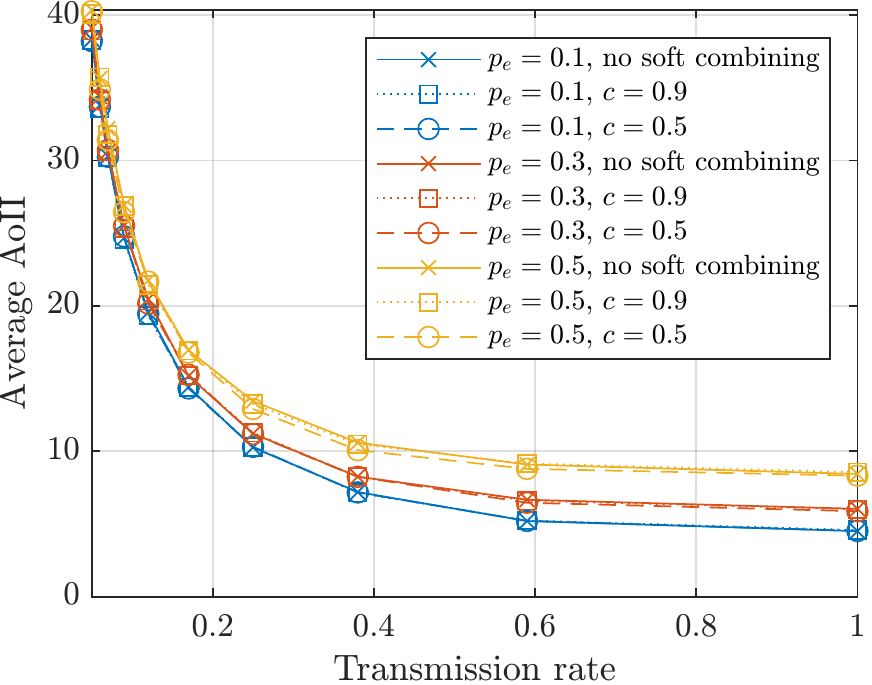}
    \caption{Average AoII versus the transmission rate constraint of the optimal threshold-based policy. The source model parameters are $\alpha=0.2$ and $N=128$ and the HARQ maximum re-transmissions are $r_{max}=2$.}
    \label{fig:p0_c_a_1}
    ~\\~\\
    \includegraphics[width=.94\linewidth]{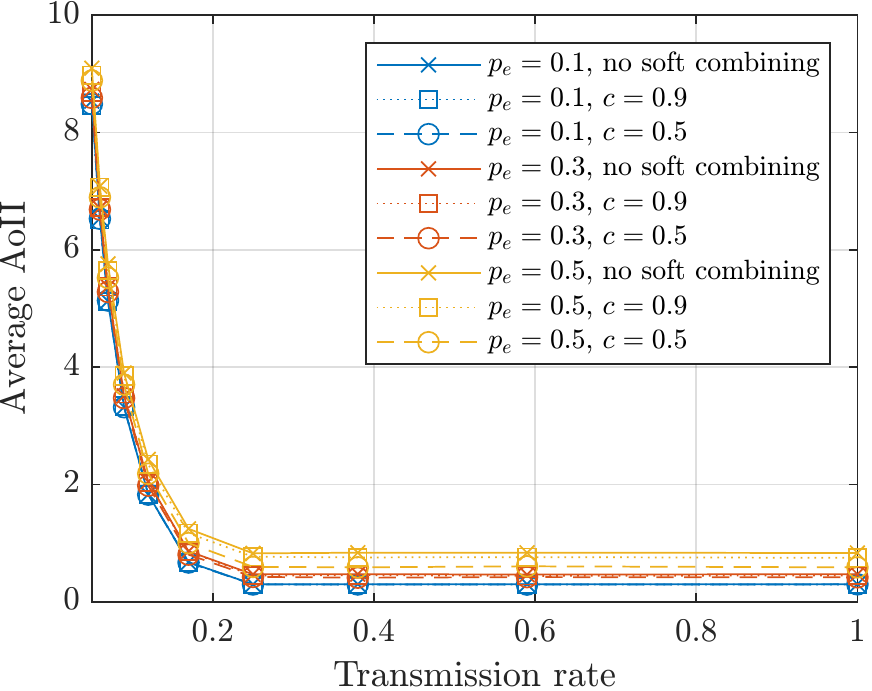}
    \caption{Average AoII versus the transmission rate constraint of the optimal threshold-based policy. The source model parameters are $\alpha=0.8$ and $N=128$ and the HARQ maximum re-transmissions are $r_{max}=2$.}
    \label{fig:p0_c_a_2}
\end{figure}

Figs.~\ref{fig:p0_r_N_1}-\ref{fig:p0_r_N_2} illustrate the average AoII for the same source models as before and for various $p_e$ and $r_{max}$ values when the decaying error rate constant $c$ is fixed to $0.5$. Again, soft combining is most helpful when the channel conditions are worse. Moreover, the gap between the cases with $r_{max}\!=\!2$ and $r_{max}\!=\!\infty$ is relatively small, corroborating the choice of small values in practical schemes~\cite{bib:harq_survey}.

Figs.~\ref{fig:p0_c_a_1}-\ref{fig:p0_c_a_2} illustrate the average AoII for two sources with the same $N$ but different $\alpha$ parameters. The results correspond to various $p_e$ and $c$ values, while the maximum re-transmissions are fixed to $r_{max}\!=\!2$. We observe that the achieved AoII increases when $\alpha$ gets smaller. Furthermore, in all experimental cases, we observe a point in the transmission rate axis beyond which the gains from increasing the transmission rate diminish. It can be seen that this point is decreasing as $\alpha$ increases.

\begin{figure}
    \centering
    \includegraphics[width=.94\linewidth]{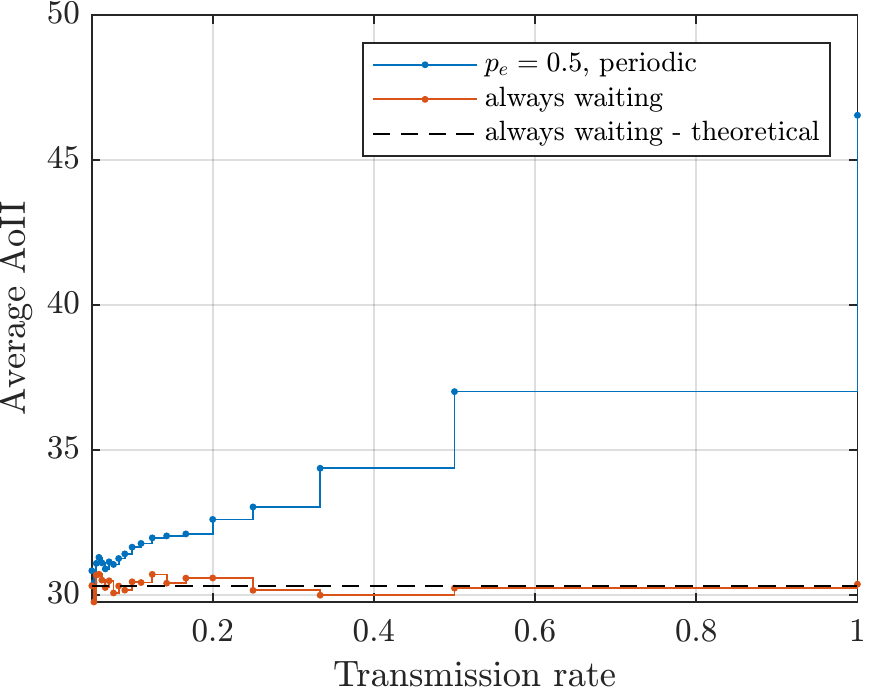}
    \caption{Average AoII versus the transmission rate constraint of the optimal waiting policy. The source model parameters are $\alpha=0.01$ and $N=32$.}
    \label{fig:always_wait_1}
    ~\\~\\
    \includegraphics[width=.94\linewidth]{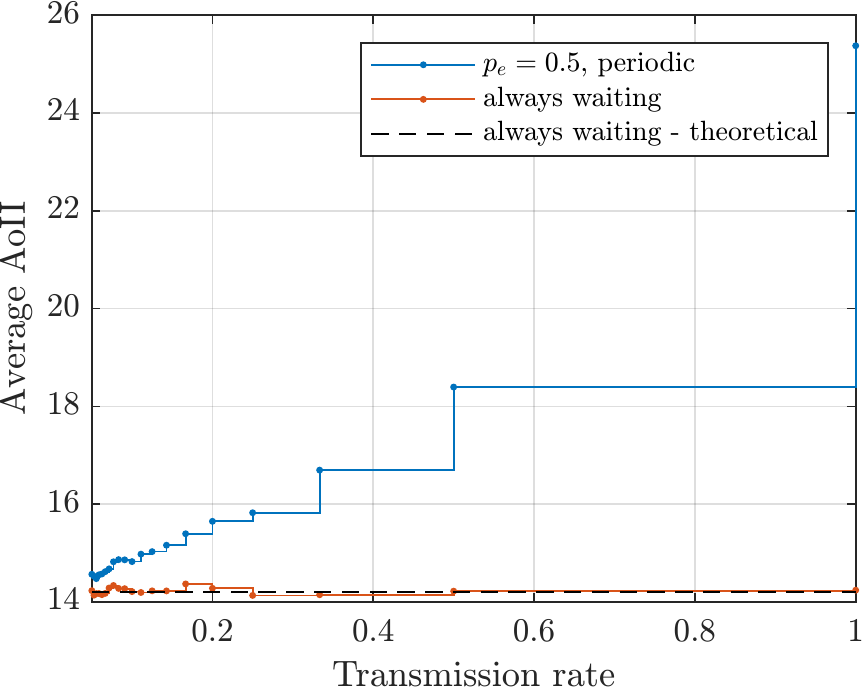}
    \caption{Average AoII versus the transmission rate constraint of the optimal waiting policy. The source model parameters are $\alpha=0.01$ and $N=16$.}
    \label{fig:always_wait_2}
\end{figure}

Our final results concern a case where the optimal policy is to always wait, i.e., when $\mu\!\geq\!\alpha$. Figs.~\ref{fig:always_wait_1}-\ref{fig:always_wait_2} illustrate the average AoII for two different sources with $\mu\!\geq\!\alpha$. The figures include both the simulated and the theoretical average AoII~\eqref{eq:waiting_AoII} of the optimal waiting policy. The optimal policy is compared to a periodic policy. It can be seen that, indeed, transmissions worsen the achieved AoII.

\section{Conclusion}\label{sec:conclusion}
This paper elaborated on the design of a remote monitoring system with HARQ, under a transmission rate constraint. The communication system was evaluated by the duration that the receiver has incorrect information for a remote $N$-ary symmetric Markov data source. To that end, we employed the long-term average AoII as the performance metric. The problem was formulated as a CMDP, and by leveraging its structural properties, we proved that an optimal transmission policy exists, which is a randomized mixture of two discrete threshold-based policies that randomize at most on one state. The optimal parameters are derived in analytic form and are computed using the binary search algorithm described in~\cite{bib:aoii}. Finally, the numerical results highlight the impact of (i) the source dynamics, (ii) the channel conditions and (iii) the resource constraint. 

Future extensions of this work that are of high interest, both from a theoretical and practical perspective, include the case of communication with delays that depend on the source state due to their different content, and more general Markov sources, e.g. where each state has a different probability to stay at the same state.

\appendix

\subsection{Derivation of the CMDP Transition Probabilities}\label{apdx:transition_prob}

First, consider the case where the transmitter opted to transmit, i.e., $y_t=1$. Let $D$ denote the event that the transmission results in successful decoding and with $\Bar{D}$ the complement of $D$. The events $D$ and $\Bar{D}$ happen with probability $p(r_t)$ and $1-p(r_t)$, respectively. When the event $D$ occurs, the transmission count $r_{t+1}$ always reverts to zero. When $\Bar{D}$ occurs, the transmission count $r_{t+1}$ increases only if the transmitted and the current source values coincide since the previous value will be re-transmitted. On the other hand, if the transmitted and current source values differ, the monitor will not benefit from a re-transmission of the old information since it is incorrect. Therefore, it is preferable to set the transmission count $r_{t+1}$ to zero and begin a new round of transmissions with fresh information.
%the next action is a re-transmission of the previous packet. If $\Bar{D}$ occurs and the next action is either waiting or a transmission with a new source value, the transmission count $r_{t+1}$ becomes zero.}

Assume that $\delta_t=0$ and the event $D$ occurs. We distinguish between two cases: i) the source has not changed between the time slots $t$ and $t{+}1$, and ii) the source has changed. In the first case, the decoded packet contains accurate information, while in the latter case, it is inaccurate. Note that in the second case, the AoII increases. Based on the source model, the first case occurs with probability $\alpha$ and the second with probability $1-\alpha$. Due to the independence of the source and the channel conditions, we get
\begin{align}
    &P\left(S_{t+1}{=}(0,0),D\mid\delta_t=0,r_t\geq 0, y_t = 1\right) = \alpha p(r_t),\label{eq:P00D}\\
    &P\left(S_{t+1}{=}(1,0),D\mid\delta_t=0,r_t\geq 0, y_t = 1\right) = (1-\alpha)p(r_t).\label{eq:P_10D}
\end{align}

Next, assume that $\delta_t=0$ and the event $\Bar{D}$ occurs. If the source has not changed between the time slots $t$ and $t{+}1$, which happens with probability $\alpha$, the AoII will remain zero. Also, the received packet's information and source value coincide, increasing the transmission count. These conditions yield

\begin{equation}\label{eq:P01BarD}
    P{\left(S_{t+1}{=}(0{,}r_t{+}1){,}\Bar{D}\mid\delta_t=0{,}r_t\geq 0, y_t = 1\right)} = \alpha (1-p(r_t)).
\end{equation}
On the other hand, if the source has changed, which happens with probability $1-\alpha$, the AoII increases to one. Also, the received packet contains inaccurate information compared to the source, which sets the transmission count to zero. Thus,

\begin{equation}\label{eq:P10BarD}
    P{\left(S_{t+1}{=}(1,0),\Bar{D}\mid\delta_t{=}0,r_t\geq 0, y_t {=} 1\right)} = (1-\alpha)(1-p(r_t)).
\end{equation}

Interestingly, by combining~\eqref{eq:P00D}-\eqref{eq:P10BarD} we obtain

\begin{align}
    &P\left(\delta_{t+1}{=}1\mid\delta_t=0,r_t\geq 0, y_t = 1\right) = 1-\alpha\,.\\
    &P\left(\delta_{t+1}{=}0\mid\delta_t=0,r_t\geq 0, y_t = 1\right) = \alpha\,.
\end{align}
In other words, $\delta_{t+1}$ is independent of $r_t$ under the condition that $\delta_t = 0$. In addition, we observe that if $\delta_{t+1}=1$, then, it always holds that $r_{t+1}=0$. On the other hand, if $\delta_{t+1}=0$, the future $\delta_{t+2}$ continues to be independent of $r_{t+1}$. Therefore, when $\delta_t=0$, the transmission count $r_t$ does not impact the future AoII. Hence, we shall ignore the transmission count when $\delta_t=0$ and freeze it to $r_t=0$. Thus,~\eqref{eq:P01BarD} can be rewritten as follows,

\begin{equation}\label{eq:P01BarD2}
    P\left(S_{t+1}{=}(0,0),\Bar{D}\mid\delta_t=0,r_t= 0, y_t = 1\right) = \alpha  (1-p(r_t)).
\end{equation}
By combining~\eqref{eq:P00D} with~\eqref{eq:P01BarD2} and \eqref{eq:P_10D} with~\eqref{eq:P10BarD}, we derive
\begin{align}
    &P\left(S_{t+1}{=}(0,0)\mid\delta_t=0,r_t\geq 0, y_t = 1\right) = \alpha\,.\\
    &P\left(S_{t+1}{=}(1,0)\mid\delta_t=0,r_t\geq 0, y_t = 1\right) = 1-\alpha\,.
\end{align}

Next, consider the case where $\delta_t>0$ and the transmitter opted to transmit, i.e., $y_t=1$. Assume the occurrence of event $D$, as defined above. If the source remained at the transmitted value after the reception of the packet, which happens with probability $\alpha$, the AoII becomes zero. However, if the source has changed state, the decoded information is incorrect, and the AoII increases. Due to the independence of the source and the channel conditions, we obtain
\begin{flalign}
    &P(S_{t+1}{=}(0,0),D\mid\delta_t>0,r_t\geq 0, y_t = 1) = \alpha p(r_t),\label{eq:P00BarD}\\ 
    &P{(S_{t+1}{=}(\delta_t{+}1{,}0),D\mid\delta_t\!>\!0,r_t\!\geq\! 0, y_t \!=\! 1)} = (1-\alpha) p(r_t).\label{eq:P10D}
\end{flalign}

Now assume that the event $\Bar{D}$ occurs. If the source has remained at the same state, which happens with probability $\alpha$, the AoII grows and the transmission count increases. If the source has changed and returned to the value already known by the monitor, which happens with probability $\mu$, the distortion becomes zero. Therefore, the AoII reverts to zero, and the same happens to the transmission count due to our observations on its independence from the future AoII when it is currently equal to zero. Lastly, in the case where the source changed but did not return to the previously known value, which happens with probability $1-\alpha-\mu$, the AoII increases and the transmission count reverts to zero, as the next transmission will contain the new value. Hence,
\begin{flalign}
    &P(S_{t+1}{=}(\delta_t{+}1{,}r_t{+}1),\Bar{D}\mid\delta_t{>}0,r_t{\geq} 0, y_t {=} 1)  = \alpha (1{-}p(r_t)),\label{eq:P11BarD_positiveAoII}\\
    &P(S_{t+1}{=}(0,0),\Bar{D}\mid\delta_t>0,r_t\geq 0, y_t = 1) =  \mu (1-p(r_t)),\label{eq:P00BarD2}\\
    &P(S_{t+1}{=}(\delta_t{+}1,0),\Bar{D}\mid\delta_t>0,r_t\geq 0, y_t = 1) \nonumber\\
    &= (1-\alpha-\mu)(1-p(r_t))\ .\label{eq:P10BarD_positiveAoII}
\end{flalign}
Combining~\eqref{eq:P00BarD} with~\eqref{eq:P00BarD2} and~\eqref{eq:P10D} with~\eqref{eq:P10BarD_positiveAoII} and utilizing the definitions in~\eqref{eq:gamma1}, \eqref{eq:gamma2}, we deduce that
\begin{flalign}
    &P(S_{t+1}{=}(0,0)\mid\delta_t>0,r_t\geq 0, y_t = 1) = 1\!-\!\gamma_1(r_t)\!-\!\gamma_2(r_t),\\
    &P(S_{t+1}{=}(\delta_t{+}1,0)\mid\delta_t>0,r_t\geq 0, y_t = 1) = \gamma_2(r_t).
\end{flalign}
Furthermore, we rewrite~\eqref{eq:P11BarD_positiveAoII} as
\begin{equation}
    P(S_{t+1}{=}(\delta_t{+}1,r_t{+}1)\mid\delta_t>0,r_t\geq 0, y_t = 1) = \gamma_1(r_t).
\end{equation}

Finally, we examine the case where the transmitter opted to wait, i.e., $y_t=0$. To that end, we will employ the symmetry of the source process. First, notice that the probability of arrival at some state equals $\mu$, regardless of the state of departure. Moreover, the probability of staying at a state equals $\alpha$ for all states. It follows that the distortion process in the absence of transmissions progresses according to a binary Markov chain as illustrated in Fig.~\ref{fig:distortion}. Then,

\begin{align}
    &P(S_{t+1}{=}(\delta_t{+}1,0)\mid\delta_t>0,r_t\geq 0, y_t = 0) = 1-\mu\,,\\
    &P(S_{t+1}{=}(0,0)\mid\delta_t>0,r_t\geq 0, y_t = 0) = \mu\,,\\
    &P(S_{t+1}{=}(1,0)\mid\delta_t=0,r_t= 0, y_t = 0) = 1-\alpha\,,\\
    &P(S_{t+1}{=}(0,0)\mid\delta_t=0,r_t= 0, y_t = 0) = \alpha\,.
\end{align}
Notice that the transmission count $r_{t+1}$ is always zero due to Remark~\ref{rem:wait_count}.

\begin{figure}
    \centering
    \includegraphics[width=\linewidth]{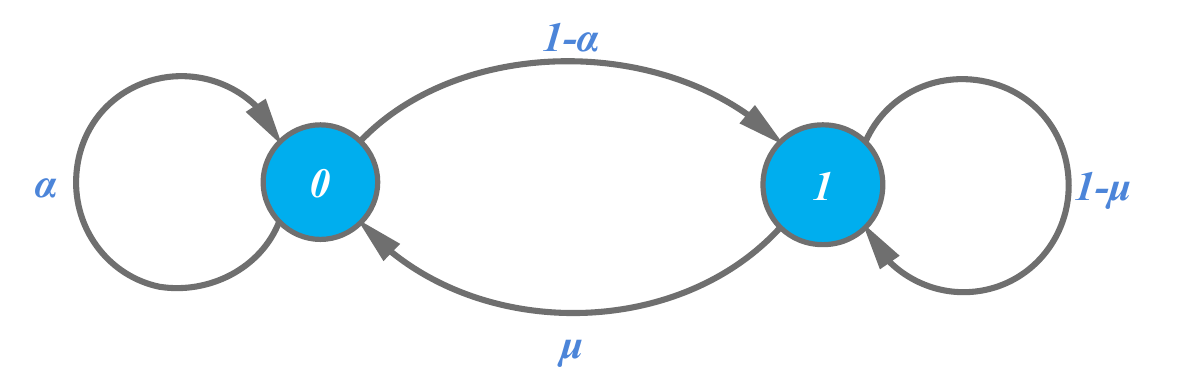}
    \caption{The distortion process in the absence of transmissions.}
    \label{fig:distortion}
\end{figure}

\subsection{Proof of Proposition~\ref{prop:rate}}\label{apdx:rate}

The state process of the threshold-based policy is the Markov Chain illustrated in Fig.~\ref{fig:threshold_MC}. Let us denote with $q_{\delta,r}$ the stationary distribution of the state $S=(\delta,r)$. The achieved transmission rate for a given threshold $n_0$ is the summation

\begin{equation}\label{eq:achiev_rate_1}
    C_{\pi_\lambda} = \sum_{0\leq h}\ \sum_{0\leq r\leq h} q_{n_0+h,r}\,.
\end{equation}

\begin{figure*}
    \centering
    \includegraphics[width=\linewidth]{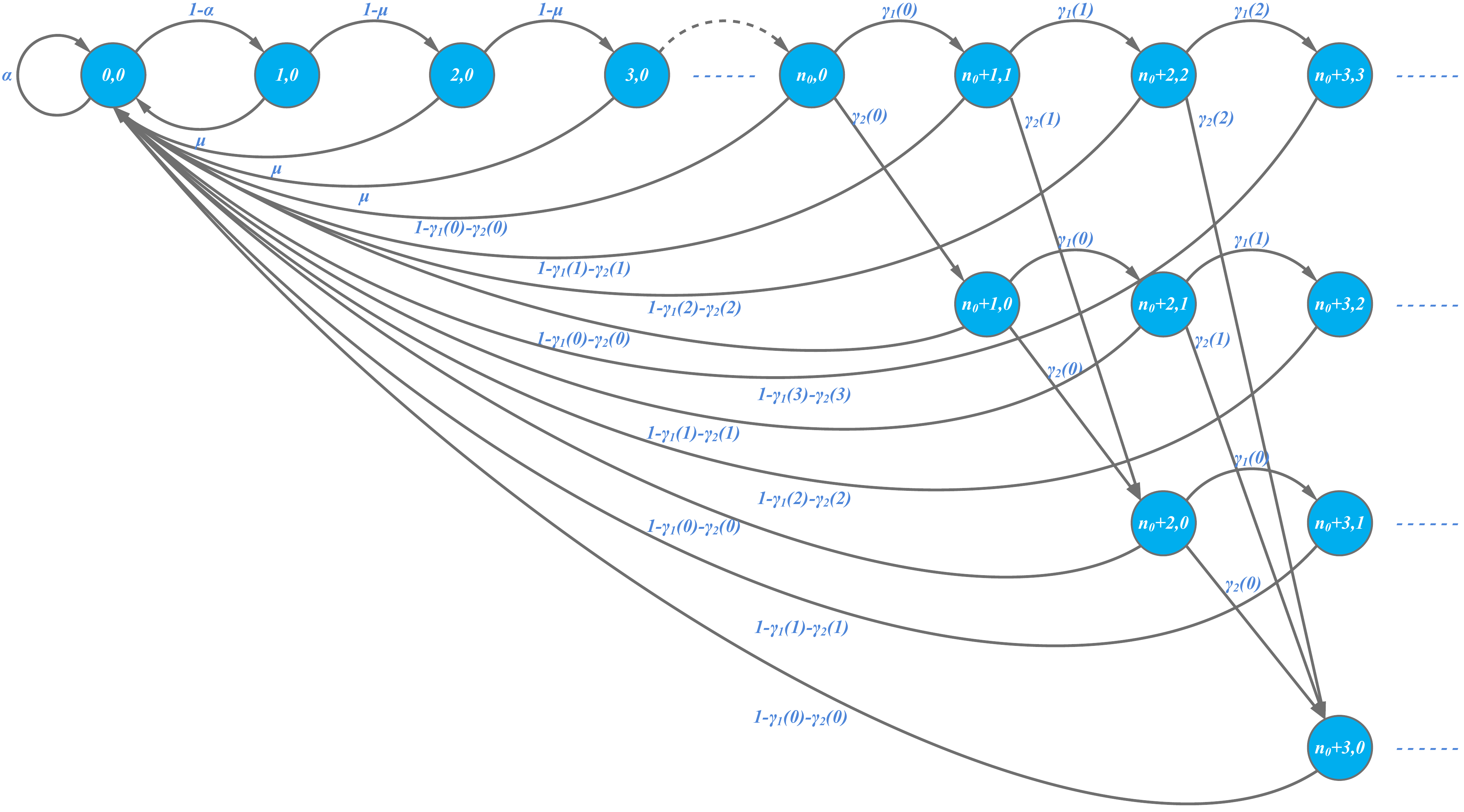}
    \caption{Transition process of the threshold-based transmission policy.}
    \label{fig:threshold_MC}
\end{figure*}

From Fig.~\ref{fig:threshold_MC} we infer the following equations,

\begin{align}
    &q_{n_0+h,r} = \begin{cases} q_{n_0+h-r,0} \prod_{j=0}^{r-1}\gamma_1(j) &\text{if $\ 0\leq h,\ 0\leq r\leq h$}\,,\\
                             0 &\text{if $\ 0\leq h,\ h < r$}\,,
                \end{cases} \label{eq:auxil1_prop3}\\
    &q_{n_0+h,0} = \sum_{k=0}^{h-1}\gamma_2(k)q_{n_0+h-1,k}\,, \quad \text{for $\ 1\leq h$}\,,\label{eq:auxil2_prop3}\ \\
    &q_{k,0} = (1-\alpha)(1-\mu)^{k-1}q_{0,0}\,, \quad \text{for $\ 1 \leq k \leq n_0$}\,. \label{eq:auxil3_prop3}
\end{align}
Combining~\eqref{eq:auxil1_prop3} and~\eqref{eq:auxil2_prop3} we derive

\begin{equation}\label{eq:auxil4_prop3}
    q_{n_0+h,0} = \sum_{k=0}^{h-1}\left[\gamma_2(k)q_{n_0+h-k-1,0} \prod_{j=0}^{k-1}\gamma_1(j)\right], \quad 1 \leq h\,.
\end{equation}

Define the function $m(h,r)$ that traces $q_{n_0+h,r}$ back to $q_{n_0,0}$ such that 

\begin{equation}
    q_{n_0+h,r} = m(h,r) q_{n_0,0}\,, \quad 0\leq h,\ 0\leq r\,.
\end{equation}
Equivalently, by~\eqref{eq:auxil3_prop3},
\begin{equation}\label{eq:prob_multipl_fac_all}
    q_{n_0+h,r} = (1-\alpha)(1-\mu)^{n_0-1} m(h,r) q_{0,0}\,, \quad 0\leq h,\ 0\leq r\,.
\end{equation}
Obviously, $m(0,0)=1$. Also, from~\eqref{eq:auxil1_prop3} we derive

\begin{equation}\label{eq:prob_multipl_fac_hr}
    m(h,r) = \begin{cases} m(h{-}r,0) \prod_{j=0}^{r-1}\gamma_1(j) &\text{if $\ 0\leq h,\ 0\leq r\leq h$}\,,\\
                             0 &\text{if $\ 0\leq h,\ h < r$}\,,
                \end{cases}
\end{equation}
and from~\eqref{eq:auxil4_prop3} 
\begin{equation}\label{eq:prob_multipl_fac_h0}
    m(h,0) = \sum_{k=0}^{h-1}\left[\gamma_2(k)m(h{-}k{-}1,0) \prod_{j=0}^{k-1}\gamma_1(j)\right], \quad 1 \leq h\,.
\end{equation}

Replacing~\eqref{eq:prob_multipl_fac_all} in~\eqref{eq:achiev_rate_1} we obtain~\eqref{eq:achiev_rate_result}. Lastly, the normalization condition of the stationary distribution states that
\begin{equation}\label{eq:q00_derivation}
    q_{0,0}+\sum_{k=1}^{n_0-1}q_{k,0} + \sum_{h=0}^\infty \sum_{r=0}^{h}q_{n_0+h,r} = 1\,,
\end{equation}
which after being combined with~\eqref{eq:auxil3_prop3} and~\eqref{eq:prob_multipl_fac_all}, yields the expression of $q_{0,0}$ in~\eqref{eq:q00}.

\subsection{Proof of Theorem~\ref{thm:cmdp_policy}}\label{apdx:cmdp_policy}

The proof is based on the results of~\cite{bib:sennott}. To this end, we will prove that assumptions 1-5 of~\cite{bib:sennott} hold for our problem. These assumptions are a set of structural properties of the CMDP that will allow proving the existence and structure of an optimal policy. First, we introduce a special class of policies with the following definition:

\begin{definition}[Definition 2.3 of~\cite{bib:sennott}]
Let $G\subset \mathcal{S}$ be a non-empty subset of states of a CMDP. Given a state $S\in\mathcal{S}$, let $\mathcal{R}(S,G)$ be the class of policies $\pi$ such that $P_{\pi}(S\in G \text{ for some } t\geq 1\mid S_0 = S)=1$ and the expected time $m_{S,G}$ of the first passage from $S$ to $G$ under $\pi$ is finite. Let $\mathcal{R^*}(S,G)$ be the class of policies $\pi\in \mathcal{R}(S,G)$ such that, in addition, the expected average AoII $J_{S,G}(\pi)$ and the expected transmission rate $C_{S,G}(\pi)$ of the first passage from $S$ to $G$ are finite.
\end{definition}

Next, we state and prove the assumptions mentioned above for our problem one by one.

\begin{assumption}
For all $b\!>\!0$, the set $G(b)\triangleq \{S\!=\!(\delta,r)\mid \text{ there exists an action } y \text{ such that } f(\delta)+y\leq b\}$ is finite.
\end{assumption}
\begin{proof}
This holds because the function $f(\cdot)$ is a monotonically increasing and unbounded function.
\end{proof}

\begin{assumption}There exists a deterministic policy $\pi$ that induces a Markov chain with the following properties: the state space $\mathcal{S}$ consists of a single (non-empty) positive recurrent class $K$ and a set $U$ of transient states such that $\pi\in\mathcal{R^*}(S,R)$, for any $S\in U$, and both the average AoII $J_{\pi}$ and the average transmission rate $C_\pi$ on $K$ are finite.
\end{assumption}
\begin{proof}
Consider the policy $\pi(S)\!=\!1$ for all $S\in\mathcal{S}$. That is, $\pi$ is the always-transmit policy. %From the transition probabilities in~\eqref{eq:transition_prob}, we can deduce that this policy induces a Markov chain that consists of a single recurrent class $R=\{(\delta,r):\delta=\mathbbm{N},r=\mathbbm{N}\}=\mathcal{S}$ (the transient set $U$ is empty).
By Lemma~\ref{lem:unichain}, the policy induces a Markov chain that consists of a single recurrent class $K=\{(\delta,r):\delta\in\mathbbm{N},r\in\mathbbm{N}\}=\mathcal{S}$ (the transient set $U$ is empty). Moreover, $C_\pi\!=\!1$ and also $J_{\pi}$ is finite due to the condition we imposed with~\eqref{eq:bound_cond}.
\end{proof}

\begin{assumption}Given any two states $S\neq S'\in \mathcal{S}$, there exists a policy $\pi$ (a function of $S$ and $S'$) such that $\pi \in \mathcal{R^*}(S,S')$.
\end{assumption}
\begin{proof}
Again, consider the policy $\pi(S)\!=\!1$ for all $S\in\mathcal{S}$. By Lemma~\ref{lem:unichain}, there is a positive probability to transit from $S$ to $S'$ and vice-versa. It is trivial to verify that for such a transition, the average AoII and the transmission rate of the first passage are finite, i.e. $\pi \in \mathcal{R^*}(S,S')$.
\end{proof}

\begin{assumption}
If a deterministic policy has at least one positive recurrent state, then it has a single positive recurrent class,
and this class contains the state $(0,0)$.
\end{assumption}
\begin{proof}
This stems in a straightforward way from Lemma~\ref{lem:unichain} and the fact that, from every state, there is a positive probability to transit to $(0,0)$.
\end{proof}

\begin{assumption}
There exists a policy $\pi$ such that $J_\pi \!<\! \infty$ and $C_\pi \!<\! R$.
\end{assumption}
\begin{proof}
Consider a threshold-based policy with threshold $n_0$, which induces a new transmission if and only if $\delta\!\geq\! n_0$. The threshold $n_0$ is chosen such that \mbox{$n_0 = \inf\{n_0\in\mathbbm{N} : C_{\pi} < R\}$}. Notice that $n_0$ is finite, since $C_{\pi}$ is decreasing with respect to $n_0$, as seen in~\eqref{eq:achiev_rate_result}. By the condition in~\eqref{eq:bound_cond}, we infer that $J_{\pi} \!<\! \infty$.
\end{proof}

Having proven assumptions 1-5, the results in~\cite[Thm. 2.5, Prop. 3.2, Lemma 3.9, Lemma 3.12]{bib:sennott} also hold for our CMDP, which proves Theorem~\ref{thm:cmdp_policy}.

\subsection{Proof of Proposition~\ref{prop:rate_monotonic}}\label{apdx:rate_monotonic}

Due to~\cite[Lemma 3.4]{bib:sennott}, the transmission rate $C_{\pi_{\lambda}}$ is non-increasing with respect to $\lambda$. Moreover, from the expression of $C_{\pi_\lambda}$ in~\eqref{eq:achiev_rate_result}, it is trivial to see that $C_{\pi_\lambda}$ is decreasing with respect to $n_0$. Since $C_{\pi_\lambda}$ is related with $\lambda$ only through $n_0$, it follows that $n_0$ is non-decreasing with respect to $\lambda$.

\bibliographystyle{IEEEtran}
% argument is your BibTeX string definitions and bibliography database(s)
\bibliography{IEEEabrv,biblio}

\vfill

% Can be used to pull up biographies so that the bottom of the last one
% is flush with the other column.
%\enlargethispage{-5in}

\end{document}